\newtheorem{proposition}{Proposition}
\newtheorem{theorem}{Theorem}
\newtheorem{lemma}{Lemma}
\newtheorem{corollary}{Corollary}
\theoremstyle{definition}
\newtheorem{remark}{Remark}
\newtheorem{example}{Example}
\newcommand{\R}{\mathbb{R}} 
\newcommand{\C}{\mathbb{C}} 
\newcommand{\nat}{\mathbb N} 
\newcommand{\Z}{\mathbb Z} 
\newcommand{\half}{\tfrac{1}{2}} 
\newcommand{\mo}[1]{\left| #1 \right|} 
\newcommand{\frecc}{\to}
\newcommand{\hi}{\mathcal{H}} 
\newcommand{\hh}{\mathcal{H}} 
\newcommand{\lh}{\mathcal{L(H)}} 
\newcommand{\sh}{\mathcal{S(H)}} 
\newcommand{\ip}[2]{\left\langle\,#1\,|\,#2\,\right\rangle} 
\newcommand{\kb}[2]{|#1\rangle\langle#2|} 
\newcommand{\no}[1]{\left\|#1\right\|} 
\newcommand{\tr}[1]{{\rm tr}\left[#1\right]} 
\newcommand{\id}{\mathbbm{1}} 
\newcommand{\fii}{\varphi}
\newcommand{\lam}{\lambda}
\newcommand{\Lam}{\Lambda}
\newcommand{\lft}{\left(}
\newcommand{\rgt}{\right)}
\newcommand{\ff}{\mathcal{F}}
\newcommand{\Ao}{\mathsf{A}}
\newcommand{\Bo}{\mathsf{B}}
\newcommand{\Co}{\mathsf{C}}
\newcommand{\Go}{\mathsf{G}}
\newcommand{\Ec}{\mathcal{E}} 
\newcommand{\Ii}{\mathcal{I}}
\newcommand{\Ji}{\mathcal{J}}
\begin{document}\setlength{\arraycolsep}{2pt}

\title[Informationally complete joint measurements]{Informationally complete joint measurements on finite quantum systems}

\author[Carmeli]{Claudio Carmeli}
\thanks{Claudio Carmeli, Dipartimento di Fisica, Universit\`a di Genova, Via Dodecaneso 33, I-16146 Genova, Italy\\ 
email: claudio.carmeli@gmail.com}

\author[Heinosaari]{Teiko Heinosaari}
\thanks{Teiko Heinosaari, Turku Centre for Quantum Physics, Department of Physics and Astronomy, University of Turku\\
email: teiko.heinosaari@utu.fi}

\author[Toigo]{Alessandro Toigo}
\thanks{Alessandro Toigo, Dipartimento di Matematica, Politecnico di Milano, Piazza Leonardo da Vinci 32, I-20133 Milano, Italy, and I.N.F.N., Sezione di Milano, Via Celoria 16, I-20133 Milano, Italy\\
email: alessandro.toigo@polimi.it}


\begin{abstract}
We show that there are informationally complete joint measurements of two conjugated observables on a finite quantum system, meaning that they enable to identify all quantum states from their measurement outcome statistics.
We further demonstrate that it is possible to implement a joint observable as a sequential measurement.
If we require minimal noise in the joint measurement, then the joint observable is unique.
If the dimension $d$ is odd, then this observable is informationally complete.
But if $d$ is even, then the joint observable is not informationally complete and one has to allow more noise in order to obtain informational completeness.
\end{abstract}

\maketitle

\section{Introduction}\label{sec:intro}

The general aim in quantum tomography is to identify quantum states from measurement outcome statistics.
A collection of observables with this property is called \emph{informationally complete} \cite{Prugovecki77}.
Even a single observable can be informationally complete, but then it must be a noncommutative positive operator valued measure (POVM) \cite{BuLa89,Busch91}.
We will study a class of informationally complete POVMs in dimension $d$ with the minimal number of $d^2$ outcomes and we will explain how they can be implemented as sequential measurements of two $d$-outcome measurements.
One can interpret the generated joint observable as a phase space measurement in the discrete phase space $\Z_d\times\Z_d$ \cite{OpBuBaDr95,Vourdas97}.

There are some particularly interesting approaches to finite dimensional quantum tomography, and one of them is based on complete collections of mutually unbiased bases (MUBs) \cite{Ivanovic81, Wootters86,WoFi89}.
In a $d$-dimensional Hilbert space one needs $d+1$ MUBs in order to be able to identify all quantum states, but it is not known if a complete set of MUBs exists in all dimensions.
In fact, there is evidence that for $d=6$ there is no complete set of MUBs \cite{BrWe08,RaLuEn11}. 

In our scheme we start from two mutually unbiased bases connected by the finite Fourier transform.
They define a pair of complementary observables, which cannot be measured jointly. 
However, it is possible to realize their joint measurement if some additional noise is allowed.
We show that their joint measurement can be chosen to be informationally complete, and that this can be realized as a sequential measurement where we first perform a `weak measurement' in one basis and then another successive measurement in the other basis.
Compared to the fact that one would need $d+1$ complementary observables in order to reach informational completeness in separate measurements, it is remarkable that in the sequential scheme only two observables suffice.  

The price to have a joint measurement is that the marginal observables are not the original complementary observables but their unsharp versions.
We will analyze the required additional noise and characterize the optimal joint observable from this point of view.
The qubit case has been first studied in \cite{Busch86}, and our work generalizes those results to arbitrary finite dimension.

The covariant phase space observables, i.e., POVMs covariant under the finite Weyl-Heisenberg group, play a special role in our investigation. We prove that if a pair of conjugate observables have a joint measurement, then they also have a joint measurement which is a covariant phase space observable.
Since every covariant phase space observable arises from a sequential measurement of two conjugate observables \cite{CaHeTo11}, the covariant phase space observables are an outstanding choice for finite dimensional quantum tomography.

The Weyl-Heisenberg group has also a pivotal role in the investigations of symmetric informationally complete (SIC) observables \cite{ReBlScCa04,Appleby05JMP}.
It is generally believed that a Weyl-Heisenberg covariant SIC observable exists in every finite dimension and their existence is numerically tested in all dimensions up to 67 \cite{ScGr10}.
Our results show that any such observable has a neat sequential realization scheme.

There is an interesting difference between the even and odd dimensional Hilbert spaces.
If we require minimal noise in both marginal observables, then their joint observable is unique.
If $d$ is odd, then this observable is informationally complete.
But if $d$ is even, then the joint observable is not informationally complete.
This result gives an additional aspect to the common observation that quantum tomography is different in even and odd dimensions \cite{ShVo11}.

\section{Preliminaries}\label{sec:preli}

In this section we fix some notations and introduce the basic concepts.

\subsection*{States and Observables}

Let $\hi$ be a finite dimensional Hilbert space, with $\dim\hi = d \geq 2$.
We denote by $\lh$ the vector space of all linear operators on $\hi$.
A positive operator $\varrho\in\lh$ having trace one is a \emph{state}, and we denote by $\sh$ the set of all states.

Observables are generally described by \emph{positive operator valued measures} (POVMs) \cite{OQP97,FQMEA02}.
In this work we only consider observables with finite number of outcomes.
Therefore, an observable can be defined as a function $\Ao:x\mapsto\Ao(x)$, where each $\Ao(x)$ is a positive operator and $\sum_x \Ao(x) = \id$. 
Here the sum runs over all $x\in\Omega_\Ao$, where the set $\Omega_\Ao$ is the collection of all possible measurement outcomes.

If a system is prepared in a state $\varrho$, then a measurement of an observable $\Ao$ will lead to an outcome $x$ with the probability $\tr{\varrho\Ao(x)}$.

\subsection*{Informational completeness}

An observable $\Ao$ is \emph{informationally complete} if its measurement outcome probability distribution is sufficient to identify a unique state \cite{Prugovecki77}.
In other words, two different states must give rise to different probability distributions: for all pairs of states $\varrho_1,\varrho_2$,
\begin{equation*}
\tr{\varrho_1 \Ao(j)} = \tr{\varrho_2 \Ao(j)} \quad \forall j\in\Omega_\Ao \quad \Rightarrow \quad\varrho_1=\varrho_2 \, .
\end{equation*}

The informational completeness of an observable $\Ao$ is equivalent to the property that the linear span of the set $\{\Ao(j):j\in\Omega_\Ao\}$ is $\lh$ \cite{Busch91,SiSt92}.

\subsection*{Joint measurability}

Given two observables $\Ao$ and $\Bo$, we say that they are \emph{jointly measurable} if there exists a third observable $\Co$ with $\Omega_\Co=\Omega_\Ao \times\Omega_\Bo$ and satisfying 
\begin{equation}\label{eq:joint-definition}
\sum_{x\in\Omega_\Ao}\Co(x,y)=\Bo(y) \quad \forall y \, , \qquad \sum_{y\in\Omega_\Bo} \Co(x,y)=\Ao(x) \quad \forall x \, .
\end{equation}
In other words, $\Ao$ and $\Bo$ correspond to the `marginals' of $\Co$.
Any observable satisfying \eqref{eq:joint-definition} is called a \emph{joint observable} of $\Ao$ and $\Bo$ \cite{LaPu97}.

We recall that joint measurability is equivalent to the following \cite{AlCaHeTo09}: there exists an observable $\Go$ and stochastic matrices $[M_{xz}]$, $[M'_{yz}]$ such that
\begin{equation}
\sum_z M_{xz} \Go(z)=\Ao(x) \quad \forall x\, , \quad \sum_z M'_{yz} \Go(z)=\Bo(y) \quad \forall y \, .
\end{equation}
Hence, two observables are jointly measurable iff they can be `post-processed'  from a single observable.

We will use several times the following simple fact: if $\Co$ and $\Co'$ are joint observables of $\Ao$ and $\Bo$, then also all their convex combinations $t\Co+(1-t)\Co'$, $0<t<1$, are joint observables of $\Ao$ and $\Bo$.
It follows that two jointly measurable observables have either a unique joint observable or infinitely many of them.

Another useful fact is related to unitary transformations.
Let $U$ be a unitary operator on $\hi$.
Two observables $\Ao$ and $\Bo$ are jointly measurable if and only if the observables $U\Ao U^\ast$ and $U \Bo U^\ast$ are jointly measurable.
Indeed, it is easy to see that $\Co$ is a joint observable of $\Ao$ and $\Bo$ if and only if $U\Co U^\ast$ is a joint observable of $U\Ao U^\ast$ and $U\Bo U^\ast$.

\subsection*{Instruments}

An observable describes the statistics of the outcomes of a measurement but leaves open how the measurement disturbs the input state.
In order to discuss this we need the concept of an \emph{instrument} \cite{QTOS76}.
An instrument with finitely many outcomes is a mapping $\Ii:x\mapsto\Ii_x$ such that each $\Ii_x$ is 
a completely positive linear map on $\lh$ and $\sum_x \tr{\Ii_x(\varrho)}=1$ for all states $\varrho$.

The \emph{adjoint map} $\Ii_x^\ast$ of $\Ii_x$ is defined via the usual trace duality 
\begin{equation*}
\tr{S \Ii_x(T)}=\tr{\Ii_x^\ast(S)T} \qquad \forall S,T\in\lh \, .
\end{equation*}
In other words, $\Ii_x^\ast$ and $\Ii_x$ correspond to the Heisenberg and Schr\"odinger pictures, respectively.

Suppose that $\Ao$ is an observable.
Then we say that an instrument $\Ii$ is \emph{$\Ao$-compatible} if $\Ii_x^\ast(\id)=\Ao(x)$ for every $x$.
Every $\Ao$-compatible instrument describes some particular kind of measurement of $\Ao$ \cite{Ozawa84}.

An example of an $\Ao$-compatible instrument is the \emph{L\"uders instrument} $\Ii^L$, defined by
\begin{equation*}
\Ii^L_x(\varrho)=\sqrt{\Ao(x)} \varrho \sqrt{\Ao(x)} \, .
\end{equation*}
Any other $\Ao$-compatible instrument $\Ii$ is of the form
\begin{equation*}
\Ii_x(\varrho) = \Ec_x \left( \Ii^L_x(\varrho) \right)
\end{equation*}
for some collection $\{\Ec_x\}$ of completely positive trace preserving maps on $\lh$ \cite{QI06}.

\subsection*{Sequential measurements}

By a \emph{sequential measurement} we mean a setting where two measurements are combined
into a third measurement by performing them one after the other \cite{DaLe70}.
Generally, the order in which the measurements are performed is crucial \cite{BuCaLa90}.

Suppose we have two $N$-outcome observables $\Ao,\Bo$ and we measure them subsequently; first $\Ao$ and then $\Bo$.
As a result, we have in total $N^2$ possible measurement outcomes. 
Generally, we do not obtain a joint measurement of $\Ao$ and $\Bo$ since the first measurement distrubs the input state.
In fact, the overall measurement depends on the way we measure $\Ao$.
If the first measurement is described by an $\Ao$-compatible instrument $\Ii$, then 
the overall observable $\Co$ is given by 
\begin{equation*}
\tr{\varrho \Co(j,k)} = \tr{\Bo(k)\Ii_j(\varrho)} 
\end{equation*}
for all input states $\varrho$, or equivalently, 
\begin{equation*}
\Co(j,k) = \Ii_j^\ast(\Bo(k)) \, .
\end{equation*}
Let us notice that first marginal of $\Co$ is always $\Ao$, while the second marginal is a perturbed version of $\Bo$ and depends on the instrument $\Ii$.

\section{Example: sequential measurements of $\sigma_x$ and $\sigma_y$}\label{sec:qubit}

We start with a preliminary example, which is mainly a collection of well known facts.
It hints the forthcoming developments and clarifies the aims of the later sections.
We refer to \cite{BuHe08} for more details and further references.

Fix $\hi=\C^2$, and let $\Ao$ and $\Bo$ be the two observables corresponding to the measurements of spin-$\half$ components in the directions $x$ and $y$, respectively. 
Thus, 
\begin{equation*}
\Ao(\pm 1) = \half (\id\pm\sigma_x) \, , \quad \Bo(\pm 1) = \half (\id\pm\sigma_y) \, ,
\end{equation*}
where $\sigma_x,\sigma_y$ are the Pauli spin matrices.

Since $\Ao$ and $\Bo$ consist of projections and they do not mutually commute, it is not possible to measure them jointly.
Moreover, if we measure them separately on two similarly prepared ensembles, we still cannot infer the unknown state.

An alternative way is to perform a sequential measurement.
The first measurement has to be a weak measurement, meaning that we do not measure $\Ao$ but its unsharp version. 
We define an unsharp version $\Ao_\lambda$ of $\Ao$ by 
\begin{equation*}
\Ao_\lambda(j) := \lambda \Ao(j) + (1-\lambda) \frac{1}{2} \id \, , \quad j=\pm 1 \, .
\end{equation*}
Here $\lambda\in[0,1]$ is a parameter quantifying the noise or imprecision.
We can write $\Ao_\lambda$ in the form
\begin{equation*}
\Ao_\lambda(\pm 1)= \half \bigl( \id \pm \lambda \sigma_x \bigr) \, .
\end{equation*}
In a similar way we define an unsharp version $\Bo_\gamma$ of $\Bo$ by 
\begin{equation*}
\Bo_\gamma(\pm 1)= \half \bigl( \id \pm \gamma \sigma_y \bigr) \, .
\end{equation*}

We want to study the disturbance of the first measurement on the system, and for this reason we define an instrument related to $\Ao_\lambda$. 
A class of $\Ao_\lambda$-compatible instruments can be defined by
\begin{equation*}
\Ji_{\pm 1}(\varrho) = L_{\pm 1}\sqrt{\Ao_\lambda(\pm 1)}\varrho\sqrt{\Ao_\lambda(\pm 1)}L_{\pm 1}^\ast \, ,
\end{equation*}
where $L_1,L_{-1}$ are arbitrary unitary operators.
If the subsequent measurement is a $\Bo$-measurement, then the overall statistics of the sequential measurement is given by the observable
\begin{equation*}
\Co(j,k) = \Ji_j^\ast(\Bo(k)) = \sqrt{\Ao_\lambda(j)}L_j^\ast \Bo(k) L_j \sqrt{\Ao_\lambda(j)}   \, , \quad j,k=\pm 1 \, .
\end{equation*}
The properties of $\Co$ obviously depend on $L_1$ and $L_{-1}$.
In the following we consider two different choices of $L_{\pm 1}$.

\subsection*{Optimal joint measurement}

If we choose $L_{\pm 1}=\id$, then we obtain
\begin{equation*}
\Co(j,k) =\frac{1}{4} \bigl( \id + j\ \lambda \sigma_x + k\ \sqrt{1-\lambda^2} \sigma_y \bigr) \, , \quad j,k=\pm 1 \, .
\end{equation*}
In particular, the marginals are
\begin{eqnarray*}
\Co(j,+1) + \Co(j,-1) &=& \Ao_\lambda(j) \, ,  \\
\Co(+1,k) + \Co(-1,k) &=& \frac{1}{2} \bigl( \id + k\ \sqrt{1-\lambda^2}\ \sigma_y \bigr) = \Bo_{\sqrt{1-\lambda^2}}(k) \, .\\
\end{eqnarray*}
The joint observable $\Co$ is an optimal approximate joint measurement of $\sigma_x$ and $\sigma_y$.
This means that the unsharp parameters $\lambda$ and $\gamma=\sqrt{1-\lambda^2}$ saturate the inequality
\begin{equation}\label{eq:ur-qubit-2}
\lambda^2 + \gamma^2   \leq 1 \, .
\end{equation}
Indeed, it is known that this inequality is a necessary and sufficient criterion for two observables $\Ao_\lambda$ and $\Bo_\gamma$ to be jointly measurable \cite{Busch86}.
Let us also notice that the joint observable of $\Ao_\lambda$ and $\Bo_\gamma$ is unique if $\lambda^2 + \gamma^2   = 1$ \cite{Busch86}.

\subsection*{Informationally complete joint measurement}

Another interesting option is to choose
\begin{equation*}
L_{\pm 1} = \cos\frac{\theta}{2}\id \mp i \sin\frac{\theta}{2}\sigma_x 
\end{equation*}
for some fixed angle $0<\theta < \pi/2$.
In this case we obtain
\begin{align*}
\Co(j,k)  &= \frac{1}{4} \bigl( \id +j \ \lambda\sigma_x + k \ \cos\theta\sqrt{1-\lambda^2} \sigma_y + jk   \ \sin\theta\sqrt{1-\lambda^2} \sigma_z\bigr)
\end{align*}
and the marginals are
\begin{eqnarray*}
\Co(j,+1) + \Co(j,-1) &=& \Ao_\lambda(j) \, , \\
\Co(+1,k) + \Co(-1,k) &=& \frac{1}{2} \bigl( \id + k \ \cos\theta\sqrt{1-\lambda^2} \sigma_y \bigr) = \Bo_{\cos\theta\sqrt{1-\lambda^2}}(k) \, .
\end{eqnarray*}

It is easy to see that the linear span of the four operators $\Co(j,k)$, $j,k=\pm1$, is the set of all $2\times 2$ - complex matrices.
It follows that the joint observable $\Co$ is informationally complete.

The unsharpness parameters $\lambda$ and $\gamma=\cos\theta\sqrt{1-\lambda^2}$ do not saturate the inequality \eqref{eq:ur-qubit-2}.
Altering the parameter $\theta$ we can make the sum $\lambda^2+\gamma^2$ as close to $1$ as we want, hence we conclude that
$\Ao_\lambda$ and $\Bo_\gamma$ admit an informationally complete joint observable if and only if
\begin{equation}
\lambda^2 + \gamma^2   < 1 \, .
\end{equation}
 
Finally, we remark that with the choices $\lambda=1/\sqrt{3}$ and $\theta=\pi/4$ the joint observable $\Co$ is a symmetric informationally complete (SIC) observable.

\section{Conjugate observables}\label{sec:conjugate}

\subsection{Mutually unbiased bases and complementary observables}\label{sec:mub}

We start by recalling the usual definition of complementary observables in a finite $d$-dimensional Hilbert space and some related basic facts \cite{Schwinger60}, \cite{Kraus87}.
We denote $\Z_d\equiv \{0,\ldots,d-1\}$.
Let $\{\varphi_j\}_{j\in\Z_d}$ and $\{\psi_k\}_{k\in\Z_d}$ be \emph{mutually unbiased bases} (MUBs), i.e., they are orthonormal bases in $\hi$ and
\begin{equation}\label{eq:mub}
\mo{\ip{\varphi_j}{\psi_k}}^2 = 1/d \qquad \forall j,k \in\Z_d \, . 
\end{equation}
We define two $d$-outcome observables $\Ao$ and $\Bo$ corresponding to $\{\varphi_j\}_{j\in\Z_d}$ and $\{\psi_k\}_{k\in\Z_d}$, respectively.
Hence, 
\begin{equation}\label{eq:AB}
\Ao(j)=\kb{\varphi_j}{\varphi_j} \, , \qquad \Bo(k)=\kb{\psi_k}{\psi_k} \, .
\end{equation}
Obviously, two orthonormal bases $\{\varphi_j\}_{j\in\Z_d}$ and $\{\varphi'_j\}_{j\in\Z_d}$ define the same observable $\Ao$ iff $\varphi'_j=\alpha_j \varphi_j$ for some complex numbers $\alpha_j$ of modulus one.

The mutual unbiasedness condition \eqref{eq:mub} can be rephrased by saying that $\Ao$ and $\Bo$ are \emph{complementary} observables, meaning that in any state $\varrho$ where the outcome of $\Ao$ is predictable, the $\Bo$-distribution is uniform (and vice versa).
This entails that the following implications are valid for any state $\varrho$ and all outcomes $j,k\in\Z_d$,
\begin{align*}
\tr{\varrho\Ao(j)} &= 1 \quad\Rightarrow\quad \tr{\varrho\Bo(k)} = 1/d \\
\tr{\varrho\Bo(k)} &= 1 \quad\Rightarrow\quad \tr{\varrho\Ao(j)} = 1/d \, .
\end{align*}
Since $\tr{\varrho\Ao(j)} = 1$ iff $\varrho=\kb{\varphi_j}{\varphi_j}$, it is easy to see that the complementarity of $\Ao$ and $\Bo$ is indeed equivalent to the mutual unbiasedness of the bases $\{\varphi_j\}_{j\in\Z_d}$ and $\{\psi_k\}_{k\in\Z_d}$.

There is a canonical way to produce two mutually unbiased bases.
In the following, suppose an orthonormal basis $\{\fii_k\}_{k\in\Z_d}$ of $\hh$ is fixed. 
Denoting $\omega\equiv e^{2\pi i /d}$, we define the following unitary representations $U$ and $V$ of the cyclic group $\Z_d$ in $\hh$:
\begin{eqnarray*}
U_x \fii_k & := & \fii_{k+x} \\
V_y \fii_k & := & \omega^{yk} \fii_k
\end{eqnarray*}
for all $x,y,k \in\Z_d$. 
In the above formulas and in the rest of the paper, addition and multiplication of elements in $\Z_d$ are understood modulo $d$.
(For instance, we will often use $-j=d-j$).
It is easy to verify that
\begin{equation}\label{eq:VU}
V_y U_x = \omega^{xy} \ U_x V_y \qquad \forall x,y\in\Z_d \, .
\end{equation}

The Fourier transform (with respect to the basis $\{\fii_k\}_{k\in\Z_d}$) is the unitary operator $\ff : \hh \frecc \hh$ defined by
\begin{equation}\label{eq:defF}
\ff \fii_k := \frac{1}{\sqrt{d}} \sum_{h\in\Z_d} \omega^{-hk} \fii_h \, .
\end{equation}
The adjoint operator $\ff^\ast$ of $\ff$ is given by
\begin{equation}\label{eq:defF-adjoint}
\ff^\ast \fii_k = \frac{1}{\sqrt{d}} \sum_{h\in\Z_d} \omega^{hk} \fii_h = \ff\fii_{-k} \, ,
\end{equation}
and we have $\ff^2 \fii_k = \ff^{\ast \, 2} \fii_k = \fii_{-k}$. 
We denote 
\begin{equation*}
\psi_k \equiv \ff^\ast \fii_k = \ff \fii_{-k} \, , 
\end{equation*}
and it is immediate to check that $\{\fii_j\}_{j\in\Z_d}$ and $\{\psi_k\}_{k\in\Z_d}$ are MUBs, with $\ip{\fii_j}{\psi_k} = (1/\sqrt{d}) \, \omega^{jk}$.

The Fourier transform has the intertwining properties
$$
\ff U_x = V_x^\ast \ff \, , \qquad \ff V_y = U_y \ff \, ,
$$
from which it follows that
\begin{eqnarray*}
U_x \psi_k & = & \omega^{-xk} \psi_k \\
V_y \psi_k & = & \psi_{k+y} \, .
\end{eqnarray*}

The observables $\Ao$ and $\Bo$ related to $\{\fii_j\}_{j\in\Z_d}$ and $\{\psi_k\}_{k\in\Z_d}$, respectively, satisfy the following conditions for all $j,k,x,y\in\Z_d$:
\begin{equation}\label{eq:cov-A}
U_x \Ao(j) U_x^\ast = \Ao(j+x) \, , \quad V_y \Ao(j) V_y^\ast = \Ao(j) 
\end{equation}
and
\begin{equation}\label{eq:cov-B}
U_x \Bo(k) U_x^\ast = \Bo(k) \, , \quad V_y \Bo(k) V_y^\ast = \Bo(k+y) \, . 
\end{equation}
In other words, $\Ao$ is $U$-covariant and $V$-invariant, while $\Bo$ is $U$-invariant and $V$-covariant. 
We also note that $\Ao$ and $\Bo$ are conjugated by $\ff$, i.e., 
\begin{equation}
\Bo(k) = \ff^\ast \Ao(k) \ff
\end{equation}
 for all $k\in\Z_d$. 
It is customary to say that $\Ao$ and $\Bo$ are {\em canonically conjugated} observables.

The conditions \eqref{eq:cov-A} -- \eqref{eq:cov-B} are analogous to the symmetry properties of the usual position and momentum observables on the real line $\R$ (see e.g. \cite{CaHeTo04}).
In some situations the covariance properties may have some physical meaning or motivation.
However, for our purposes they are just useful features that can be utilized later in our calculations.

\subsection{Unsharp observables}\label{sec:unsharp}

Measurements of two complementary observables are incompatible and therefore have to be performed separately. 
This means that their measurements require different settings.
However, it is possible to perform a simultaneous measurement of two complementary observables if we allow some additional imprecision or noise.
In other words, we can measure jointly \emph{unsharp versions} of $\Ao$ and $\Bo$.

We define an unsharp version $\Ao_\lambda$ of $\Ao$ by
\begin{equation*}
\Ao_\lambda(j) := \lambda \Ao(j) + (1-\lambda) \frac{1}{d} \id \, .
\end{equation*}
Here $\lambda\in[0,1]$ is a parameter quantifying the noise.
This type of noise is equivalent to the situation where an input state $\varrho$ is first depolarized into a state $\lambda \varrho + (1-\lambda)/d \id$ and then a measurement of $\Ao$ is performed.

More generally, if $\Lambda$ is a probability distribution on $\Z_d$, then we define an unsharp version $\Ao_\Lambda$ of $\Ao$ by 
\begin{equation}\label{eq:defALam}
\Ao_\Lambda(j) := \sum_{i\in\Z_d} \Lambda(j-i) \Ao(i) \, . 
\end{equation}
The special case $\Ao_\Lambda=\Ao_\lambda$ corresponds to the probability distribution $\Lambda$ defined as
\begin{equation*}
\Lambda(0) =  \lambda + (1-\lambda)/d \, , \quad  \Lambda(j) = (1-\lambda)/d \quad \textrm{if}\ j\neq 0 \, .
\end{equation*}
We can also write this probability distribution in the form
\begin{equation*}
\Lambda(j) = \lambda \delta (j) + (1-\lambda) \mu (j) 
\end{equation*}
where $\delta$ is the point distribution at $0$ and $\mu$ is the uniform distribution on $\Z_d$, i.e.,
\begin{equation}\label{eq:point-and-uniform}
 \delta (j) = \left\{\begin{array}{cc} 1 & \quad \mbox{if } j=0 \\ 0 & \quad \mbox{if } j\neq 0 \end{array} \right. \, , \qquad \mu (j) = \frac{1}{d} \quad \forall j \, .
\end{equation}

In a similar way a probability distribution $\Gamma$ on $\Z_d$ defines an unsharp version $\Bo_\Gamma$ of $\Bo$ by 
\begin{equation}\label{eq:defBGamma}
\Bo_\Gamma(k) := \sum_{i\in\Z_d} \Gamma(k-i) \Bo(i) \, . 
\end{equation}
A special class is, again, characterized by noise parameters $\gamma \in [0,1]$ and we denote
\begin{equation*}
\Bo_\gamma(k) := \gamma \Bo(k) + (1-\gamma) \frac{1}{d} \id \, .
\end{equation*}

Naturally, there are also other type of approximations of $\Ao$ and $\Bo$ than the previously defined $\Ao_\Lambda$ and $\Bo_\Gamma$.
The usefulness of $\Ao_\Lambda$ and $\Bo_\Gamma$ is that they satisfy the same covariance and invariance relations than $\Ao$ and $\Bo$, respectively.
Namely, the observables $\Ao_\Lambda$ and $\Bo_\Gamma$ satisfy the following conditions:
\begin{equation}\label{eq:cov-A-unsharp}
U_x \Ao_\Lambda(j) U_x^\ast = \Ao_\Lambda(j+x) \, , \quad V_y \Ao_\Lambda(j) V_y^\ast = \Ao_\Lambda(j) 
\end{equation}
and
\begin{equation}\label{eq:cov-B-unsharp}
U_x \Bo_\Gamma(k) U_x^\ast = \Bo_\Gamma(k) \, , \quad V_y \Bo_\Gamma(k) V_y^\ast = \Bo_\Gamma(k+y) \, . 
\end{equation}
Thus, $\Ao_\Lambda$ and $\Bo_\Gamma$ are \emph{conjugated observables} although they need not be complementary anymore \cite{CaHeTo11}.
As we will see, two observables $\Ao_\Lambda$ and $\Bo_\Gamma$ can have a joint observable even if they do not commute.

\begin{remark}
Suppose that $\widetilde{\Ao}$ is a $d$-outcome observable satisfying 
\begin{equation}\label{eq:cov-A-general}
U_x \widetilde{\Ao}(j) U_x^\ast = \widetilde{\Ao}(j+x) \, , \quad V_y \widetilde{\Ao}(j) V_y^\ast = \widetilde{\Ao}(j) 
\end{equation}
for all $j,x,y\in\Z_d$.
Then $\widetilde{\Ao}=\Ao_\Lambda$ for some probability distribution $\Lambda$.
Namely, it follows from the second condition in \eqref{eq:cov-A-general} that $\widetilde{\Ao}$ commutes with $\Ao$ (since $\Ao(j) = (1/d) \sum_y \omega^{-jy} V_y$) and hence $\widetilde{\Ao}(j)=\sum_k p_{j,k} \Ao(k)$ for some real numbers $0\leq p_{j,k} \leq 1$.
The first condition in \eqref{eq:cov-A-general} then implies that $p_{j,k}=p_{0,k-j}$.
\end{remark}

\section{Joint measurements}\label{sec:joint}

\subsection{Covariant observables}

We recall that two observables $\Ao_\Lambda$ and $\Bo_\Gamma$ are jointly measurable if they have a joint observable, i.e., an observable $\Co$ on $\Z_d\times\Z_d$ such that
\begin{equation}\label{eq:joint-povm}
\sum_{k\in\Z_d} \Co(j,k) = \Ao_\Lambda(j) \, , \qquad \sum_{j\in\Z_d} \Co(j,k) = \Bo_\Gamma(k)
\end{equation}
for all $j,k\in\Z_d$.
A special class of joint observables turns out to be crucial for our developments. 
We say that an observable $\Co$ on $\Z_d\times\Z_d$ is a \emph{covariant phase space observable} if
\begin{equation}
U_xV_y\Co(j,k)V_y^\ast U_x^\ast = \Co(j+x,k+y) 
\end{equation}
for all $j,k,x,y\in\Z_d$.
The covariant phase space observables have a simple form \cite{QTOS76}.
Namely, if $\Co$ is a covariant phase space observable, then there is unique operator $T\in\sh$ such that $\Co=\Co_T$, where we have denoted
\begin{equation*}
\Co_T(j,k):=\frac{1}{d}\ U_jV_k T V_k^\ast U_j^\ast \, , \qquad j,k\in\Z_d \, .
\end{equation*}
Also, each $T\in\sh$ defines a covariant phase space observable by this formula.
The correspondence $T\leftrightarrow \Co_T$ is therefore one-to-one and the elements in $\sh$ parametrize the covariant phase space observables.

The marginals of a covariant phase space observable $\Co_T$ are conjugated observables on $\Z_d$.
Indeed, a direct calculation shows that $\Co_T$ has marginals $\Ao_\Lambda$ and $\Bo_\Gamma$, with
\begin{equation}\label{eq:structure1}
\Lambda (j) = \tr{\Ao (-j)T} \, , \qquad
\Gamma (k) = \tr{\Bo (-k)T} \, .
\end{equation}
(This calculation can be found in \cite{CaHeTo11}).

The essential role of covariant phase space observables in our discussion becomes clear in the following observation.

\begin{proposition}\label{prop:if-jm-then-cov}
If $\Ao_\Lambda$ and $\Bo_\Gamma$ are jointly measurable, then they have a joint observable which is a covariant phase space observable.
\end{proposition}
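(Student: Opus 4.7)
The approach is a group-averaging argument over the finite Weyl--Heisenberg group. Starting from an arbitrary joint observable $\Co$ of $\Ao_\Lambda$ and $\Bo_\Gamma$, I would symmetrize it by defining
\[
\Co'(j,k) := \frac{1}{d^2}\sum_{x,y\in\Z_d} U_x V_y\,\Co(j-x,k-y)\,V_y^\ast U_x^\ast.
\]
Each summand is positive, and $\sum_{j,k}\Co'(j,k)=\id$ follows from $\sum_{j,k}\Co(j-x,k-y)=\id$ and $U_xV_yV_y^\ast U_x^\ast=\id$, so $\Co'$ is a POVM on $\Z_d\times\Z_d$.

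The second step is to verify covariance. Using the commutation relation \eqref{eq:VU}, we get $U_a V_b U_x V_y=\omega^{-bx}U_{a+x}V_{b+y}$, and the scalar phase cancels against its conjugate coming from the adjoint. A change of summation variables $x\mapsto x-a$, $y\mapsto y-b$ then gives $U_aV_b\Co'(j,k)V_b^\ast U_a^\ast=\Co'(j+a,k+b)$, i.e., $\Co'$ is a covariant phase space observable.

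The third (and central) step is to show that the marginals of $\Co'$ are again $\Ao_\Lambda$ and $\Bo_\Gamma$. For the first marginal,
\[
\sum_{k} \Co'(j,k) = \frac{1}{d^2}\sum_{x,y} U_x V_y\, \Ao_\Lambda(j-x)\, V_y^\ast U_x^\ast.
\]
By the $V$-invariance in \eqref{eq:cov-A-unsharp}, $V_y\Ao_\Lambda(j-x)V_y^\ast=\Ao_\Lambda(j-x)$, and by the $U$-covariance, $U_x\Ao_\Lambda(j-x)U_x^\ast=\Ao_\Lambda(j)$. Summing over $x,y$ and dividing by $d^2$ returns $\Ao_\Lambda(j)$. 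The computation for $\sum_j\Co'(j,k)=\Bo_\Gamma(k)$ is identical after swapping the roles, using the $U$-invariance and $V$-covariance in \eqref{eq:cov-B-unsharp}.

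The only place any subtlety enters is the marginal computation, and this is precisely where the structure of $\Ao_\Lambda$ and $\Bo_\Gamma$ as \emph{conjugate} observables is essential: the averaging over the full Weyl--Heisenberg group would in general spoil the marginals, but the partial invariances \eqref{eq:cov-A-unsharp}--\eqref{eq:cov-B-unsharp} ensure that each of the $d^2$ averaged terms reproduces the same marginal. I do not foresee a real obstacle beyond keeping track of the phases from \eqref{eq:VU} in the covariance step.
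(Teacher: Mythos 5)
Your proposal is correct and takes essentially the same route as the paper: your averaged observable $\Co'$ coincides (after the substitution $x\mapsto -x$, $y\mapsto -y$, using $U_x^\ast=U_{-x}$, $V_y^\ast=V_{-y}$) with the paper's uniform mixture $\widetilde{\Co}$ of the translated joint observables $\Co_{x,y}$, and your marginal and covariance checks are the same group-averaging computations the paper performs.
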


\begin{proof}
Suppose that $\Co$ is a joint observable of $\Ao_\Lambda$ and $\Bo_\Gamma$.
For each $x,y\in\Z_d$, we define an observable $\Co_{x,y}$ by 
\begin{equation}\label{eq:cxy}
\Co_{x,y}(j,k) := U_x^\ast V_y^\ast  \Co(j+x,k+y) V_y U_x \, , \qquad j,k\in\Z_d \, . 
\end{equation}
Using the covariance and invariance properties \eqref{eq:cov-A-unsharp} -- \eqref{eq:cov-B-unsharp} it is straightforward to verify that $\Co_{x,y}$ is a joint observable of $\Ao_\Lambda$ and $\Bo_\Gamma$.

We then define $\widetilde{\Co}$ to be the uniform mixture of all $\Co_{x,y}$, i.e.,
\begin{equation}\label{eq:ctilde}
\widetilde{\Co}(j,k) := \frac{1}{d^2}\sum_{x,y\in\Z_d} \Co_{x,y}(j,k) \, , \qquad j,k\in\Z_d \, . 
\end{equation}
Since every $\Co_{x,y}$ is a joint observable of $\Ao_\Lambda$ and $\Bo_\Gamma$, also $\widetilde{\Co}$ is their joint observable.
A direct calculation, using \eqref{eq:VU}, shows that $\widetilde{\Co}$ is a covariant phase space observable.
\end{proof}

We conclude from Proposition \ref{prop:if-jm-then-cov} that two observables $\Ao_\Lambda$ and $\Bo_\Gamma$ are jointly measurable iff their related probability distributions $\Lambda$ and $\Gamma$ are of the form \eqref{eq:structure1} for some $T\in\sh$. Equations \eqref{eq:structure1} can also be rewritten in a slightly different form. Namely, observe that, if $T\in\sh$, then there exists a unit vector $\phi\in\hh\otimes\hh$ such that $T={\rm tr}_2 [\kb{\phi}{\phi}]$, where ${\rm tr}_2$ is the partial trace with respect to the second factor. 
(A vector state giving $T$ via the partial trace is often called a \emph{purification} of $T$).

Conversely, if $\phi\in\hh\otimes\hh$ is a unit vector, then $T={\rm tr}_2 [\kb{\phi}{\phi}]$ is a state. 
Inserting this form into \eqref{eq:structure1} we obtain
\begin{equation}\label{eq:structure2}
\Lambda (j) = \ip{\phi}{(\Ao (-j)\otimes \id)\phi}  \, , \quad \Gamma (k) = \ip{\phi}{(\Bo(-k)\otimes \id)\phi} \, .
\end{equation}
Note that if a vector $\phi\in\hh\otimes\hh$ satisfies the above two equations for some probability densities $\Lam$ and $\Gamma$, then the normalization $\no{\phi} = 1$ is automatic. 
We thus have the following characterization of jointly measurable observables.

\begin{proposition}\label{prop:joint->phi}
Let $\Lam$, $\Gamma$ be probability densities on $\Z_d$. The following facts are equivalent:
\begin{itemize}
\item[{\rm (i)}] The observables $\Ao_\Lambda$ and $\Bo_\Gamma$ are jointly measurable.
\item[{\rm (ii)}] There exists a state $T \in \sh$ such that the probability densities $\Lambda$ and $\Gamma$ satisfy \eqref{eq:structure1}.
\item[{\rm (iii)}] There exists a vector $\phi \in \hh\otimes\hh$ such that the probability densities $\Lambda$ and $\Gamma$ satisfy \eqref{eq:structure2}.
\end{itemize}
\end{proposition}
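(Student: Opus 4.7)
The plan is to prove the chain (i) $\Rightarrow$ (ii) $\Rightarrow$ (i) and (ii) $\Leftrightarrow$ (iii), using the results already established in the excerpt, so that the real content has already been carried by Proposition \ref{prop:if-jm-then-cov} and by the explicit form of the marginals of a covariant phase space observable.

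For (i) $\Rightarrow$ (ii), I would apply Proposition \ref{prop:if-jm-then-cov}: joint measurability implies the existence of a joint observable which is a covariant phase space observable $\Co_T$ for some $T \in \sh$. Since the marginals of $\Co_T$ are known to be of the form $\Ao_\Lambda$ and $\Bo_\Gamma$ with $\Lambda$ and $\Gamma$ given by \eqref{eq:structure1}, this $T$ is the required state.

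For (ii) $\Rightarrow$ (i), the argument is the converse half of the same calculation: given $T \in \sh$ satisfying \eqref{eq:structure1}, the covariant phase space observable $\Co_T$ is well-defined, and by the computation already cited from \cite{CaHeTo11} its marginals are exactly $\Ao_\Lambda$ and $\Bo_\Gamma$. Hence $\Co_T$ witnesses joint measurability.

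For (ii) $\Leftrightarrow$ (iii), I would use the purification paragraph that precedes the statement. Given $T \in \sh$ with purification $\phi \in \hh \otimes \hh$, so that $T = {\rm tr}_2[\kb{\phi}{\phi}]$, the identity
\begin{equation*}
\tr{\Ao(-j) T} = \tr{(\Ao(-j) \otimes \id) \kb{\phi}{\phi}} = \ip{\phi}{(\Ao(-j) \otimes \id)\phi}
\end{equation*}
(and the analogous one for $\Bo$) shows that \eqref{eq:structure1} and \eqref{eq:structure2} are the same equations in different disguise. Conversely, starting from a vector $\phi$ satisfying \eqref{eq:structure2}, summing over $j$ and using $\sum_{j} \Ao(-j) = \id$ together with $\sum_j \Lam(j) = 1$ forces $\no{\phi} = 1$, so $T := {\rm tr}_2[\kb{\phi}{\phi}] \in \sh$ and the same trace identity gives \eqref{eq:structure1}.

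There is no substantial obstacle here: the only point that requires a small check is the automatic normalization of $\phi$ in (iii) $\Rightarrow$ (ii), and this follows from the fact that $\Lam$ is a probability distribution combined with $\sum_j \Ao(-j) = \id$. The weight of the proposition lies entirely in Proposition \ref{prop:if-jm-then-cov}; the present statement is essentially a repackaging of its content in terms of states and of their purifications.
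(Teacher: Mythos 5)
Your proof is correct and follows essentially the same route as the paper, which derives the equivalence of (i) and (ii) from Proposition \ref{prop:if-jm-then-cov} together with the marginal formula \eqref{eq:structure1}, and passes between (ii) and (iii) by purification. The one point the paper only asserts---that the normalization $\no{\phi}=1$ is automatic in (iii)---is precisely the check you supply by summing \eqref{eq:structure2} over $j$ and using $\sum_j \Ao(-j)=\id$.
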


Let us note that any $\Ao_\Lambda$ is jointly measurable with some $\Bo_\Gamma$.
Namely, for each $\Lambda$ we can define a state $T_\Lambda$ as
\begin{equation}
T_\Lambda := \sum_{j\in\Z_d} \Lambda(-j) \kb{\varphi_j}{\varphi_j} \, .
\end{equation}
Then $\Lambda (j) = \tr{\Ao (-j)T_\Lambda}$ and $\Ao_\Lambda$ is thus jointly measurable with $\Bo_\Gamma$, where $\Gamma$ is defined as $\Gamma (k) = \tr{\Bo (-k)T_\Lambda}$.

Proposition \ref{prop:joint->phi} can be seen as a trade-off relation between the probability distributions $\Lambda$ and $\Gamma$ that describe the deviations of $\Ao_\Lambda$ and $\Bo_\Gamma$ from $\Ao$ and $\Bo$, respectively.
For instance, if $\Lambda=\delta$, then necessarily $\Gamma=\mu$.
Hence, we recover the fact that $\Ao$ is jointly measurable only with the trivial observable and no other $\Bo_\Gamma$. 

We end this subsection with some additional observations.

\begin{remark}
\emph{If two observables $\Ao_\Lambda$ and $\Bo_\Gamma$ are jointly measurable, they can have several different covariant phase space observables as their joint observables.}

For instance, let $\{\zeta_i\}_{i\in\Z_d}$ be an orthonormal basis which is mutually unbiased with respect to both orthonormal bases $\{\varphi_j\}_{j\in\Z_d}$ and $\{\psi_k\}_{k\in\Z_d}$.
Then, for each $i\in\Z_d$, we have 
\begin{equation*}
\ip{\zeta_i}{\Ao (-j)\zeta_i}=\ip{\zeta_i}{\Bo (-k)\zeta_i}=1/d \, .
\end{equation*}
Therefore, the marginals of the covariant phase space observables $\Co_{\kb{\zeta_i}{\zeta_i}}$ are the same although $\Co_{\kb{\zeta_i}{\zeta_i}}\neq\Co_{\kb{\zeta_{i'}}{\zeta_{i'}}}$ whenever  $i\neq i'$.
 \end{remark}
 
\begin{remark}
\emph{If two observables $\Ao_\Lambda$ and $\Bo_\Gamma$ are jointly measurable, they can have a joint observable which is not a covariant phase space observable.}

For instance, let $p:\Z_d\times\Z_d\to[0,1]$ be a bivariate probability distribution with uniform margins.
Then the observable $\Co(j,k):=p(j,k)\id$ is a joint observable of $\Ao_0$ and $\Bo_0$.
It is clear that $\Co$ is a covariant phase space observable only if $p$ is a uniform distribution.
However, a bivariate probability distribution with uniform margins need not be uniform.
For instance, if we set
\begin{equation*}
p(i,j) = \frac{1}{d^2} \left( 1 - \sin \left( 2\pi \frac{ij}{d} \right) \right) \, , \quad i,j\in\Z_d \, , 
\end{equation*}
then $\sum_i p(i,j)=\sum_j p(i,j)=1/d$, but $p$ is not uniform.

The existence of non-covariant joint observables is not limited to the trivial observables $\Ao_0$ and $\Bo_0$.
Namely, suppose that $\Co$ is a joint observable of $\Ao_0$ and $\Bo_0$ and $\Co'$ is a joint observable of $\Ao_\lambda$ and $\Bo_\gamma$.
Then the convex combination $t\Co' + (1-t)\Co$, $0<t<1$, is a joint observable of $\Ao_{t\lambda}$ and $\Bo_{t\gamma}$.
It is easy to see that if $\Co'$ is a covariant phase space observable but $\Co$ is not, then their convex combination $t\Co' + (1-t)\Co$ cannot be a covariant phase space observable.
\end{remark}

\begin{remark}\label{prop:unique}
\emph{Suppose that $\Ao_\lam$ and $\Bo_\gamma$ have a unique joint observable $\Co_T$ among the covariant phase space observables and that $T^2=T$.
Then $\Co_T$ is a unique joint observable of $\Ao_\lambda$ and $\Bo_\gamma$.}

To prove this claim, let $\Co$ be a joint observable of $\Ao_\lambda$ and $\Bo_\gamma$.
We need to show that $\Co=\Co_T$.
We define $\Co_{x,y}$ and $\widetilde{\Co}$ as in \eqref{eq:cxy} -- \eqref{eq:ctilde}.
Since $\widetilde{\Co}$ is by its construction a covariant phase space observable, we must have $\widetilde{\Co}=\Co_T$ by the assumption on uniqueness.
In particular, each operator $\widetilde{\Co}(j,k)=(1/d)\, U_jV_k T V_k^\ast U_j^\ast$ is rank-1. Since $0\leq \Co_{x,y} (0,0) \leq d^2 \, \widetilde{\Co}(0,0) = d\, T$ by \eqref{eq:ctilde}, it follows that there exists a real constant $0\leq c(x,y) \leq d$ such that $\Co_{x,y} (0,0) = c(x,y) \, T$, hence
$$
\Co(x,y) = c(x,y) \, U_x V_y T V_y^\ast U_x^\ast
$$
by \eqref{eq:cxy}. 
Suppose $\lam \neq 1$. 
Since $\Co$ has $\Ao_\lam$ as its first marginal, then
$$
\sum_{y\in\Z_d} c(x,y) \, U_x V_y T V_y^\ast U_x^\ast = \lam \Ao (x) + (1-\lam) \frac{1}{d} \id \quad \forall x\in\Z_d \, .
$$
The right hand side of this equation is a rank-$d$ operator, while on the left hand side we have the sum of $d$ operators with rank-$1$. 
It then follows that the set $\{ U_x V_y T V_y^\ast U_x^\ast \}_{y\in\Z_d}$ is linearly independent in $\lh$. Since $\Co$ and $\widetilde{\Co}$ have the same marginals,
$$
\sum_{y\in\Z_d} c(x,y) \, U_x V_y T V_y^\ast U_x^\ast = \frac{1}{d} \sum_{y\in\Z_d} U_x V_y T V_y^\ast U_x^\ast
$$
for all $x$, hence $c(x,y) = 1/d$ for all $x,y$ by linear independence. 
The case $\lam = 1$ is treated in a similar way, by taking the marginal $\Bo_\gamma$ in the place of $\Ao_\lam$ (and now necessarily $\gamma\neq 1$).
Therefore, $\Co_T$ is the unique joint observable of $\Ao_\lam$ and $\Bo_\gamma$.
\end{remark}

\subsection{Unsharpness inequality}

In this subsection we apply Proposition \ref{prop:joint->phi} to the cases where $\Ao_\Lambda=\Ao_\lambda$ and $\Bo_\Gamma=\Bo_\gamma$.
These special type of marginal observables are interesting as we can quantify their unsharpnesses by single numbers $\lambda$ and $\gamma$.
In particular, we can ask how small $\lambda$ and $\gamma$ must be in order for  $\Ao_\lambda$ and $\Bo_\gamma$ to become jointly measurable.
We first notice that this question is, indeed, meaningful.

\begin{proposition}\label{prop:ord. makes sense}
Let $\lambda,\gamma\in (0,1]$. The following conditions are equivalent:
\begin{itemize}
\item[(i)] $\Ao_\lambda$ and $\Bo_\gamma$ are jointly measurable.
\item[(ii)] $\Ao_{\lambda'}$ and $\Bo_{\gamma'}$ are jointly measurable for all $0\leq\lambda' \leq \lambda$ and $0\leq\gamma'\leq\gamma$.
\item[(iii)] $\Ao_{\lambda'}$ and $\Bo_{\gamma'}$ are jointly measurable for all $0\leq\lambda' <\lambda$ and $0\leq\gamma' <\gamma$.
\end{itemize}
\end{proposition}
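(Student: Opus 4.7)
The plan is to close the circle of implications. The arrows (ii) $\Rightarrow$ (i) (take $\lambda'=\lambda$, $\gamma'=\gamma$) and (ii) $\Rightarrow$ (iii) (restrict the quantifiers) are immediate, so only (i) $\Rightarrow$ (ii) and (iii) $\Rightarrow$ (i) require argument.

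For (i) $\Rightarrow$ (ii), the idea is to mix a given joint observable with three trivial ones in order to uniformly scale down the noise in both marginals. Assume $\Co$ is a joint observable of $\Ao_\lam$ and $\Bo_\gamma$, and write $\lambda'=\alpha\lam$, $\gamma'=\beta\gamma$ with $\alpha,\beta\in[0,1]$. Noting that for any $d$-outcome observable $\Do$ the POVM $(j,k)\mapsto(1/d)\Do(j)$ is a joint observable of $\Do$ and the trivial observable $k\mapsto(1/d)\id$, I would form
\begin{equation*}
\Co'(j,k) := \alpha\beta\,\Co(j,k) + \frac{\alpha(1-\beta)}{d}\Ao_\lam(j) + \frac{(1-\alpha)\beta}{d}\Bo_\gamma(k) + \frac{(1-\alpha)(1-\beta)}{d^2}\id.
\end{equation*}
A short calculation, using the identity $\Ao_{\alpha\lam} = \alpha\Ao_\lam + (1-\alpha)(1/d)\id$ and its analogue for $\Bo$, verifies that the marginals of $\Co'$ are $\Ao_{\lambda'}$ and $\Bo_{\gamma'}$, yielding (ii).

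For (iii) $\Rightarrow$ (i), the key tool is Proposition \ref{prop:joint->phi} together with compactness of $\sh$ in finite dimension. I would pick sequences $\lam_n\nearrow\lam$ and $\gamma_n\nearrow\gamma$ with $\lam_n<\lam$, $\gamma_n<\gamma$. By (iii) and Proposition \ref{prop:joint->phi}, for every $n$ there is a state $T_n\in\sh$ with
\begin{equation*}
\lam_n\delta(j) + (1-\lam_n)/d = \tr{\Ao(-j)T_n}\,, \qquad \gamma_n\delta(k) + (1-\gamma_n)/d = \tr{\Bo(-k)T_n}
\end{equation*}
for all $j,k\in\Z_d$. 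Since $\sh$ is compact, I would pass to a convergent subsequence $T_{n_r}\to T\in\sh$. Continuity of the trace together with $\lam_n\to\lam$ and $\gamma_n\to\gamma$ shows that $T$ satisfies the analogous identities with $\lam$ and $\gamma$ in place of $\lam_n$ and $\gamma_n$, and another application of Proposition \ref{prop:joint->phi} yields joint measurability of $\Ao_\lam$ and $\Bo_\gamma$.

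The only substantive step is the compactness argument in (iii) $\Rightarrow$ (i); the other direction is just a convex-combination manipulation. The essential structural ingredient is the finite dimensionality of $\hh$, which makes $\sh$ compact and therefore guarantees that joint measurability is preserved under pointwise limits of the noise parameters.
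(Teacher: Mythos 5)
Your proof is correct and follows essentially the same route as the paper: the implication (i) $\Rightarrow$ (ii) is a convex combination of the given joint observable with trivial joint observables, and (iii) $\Rightarrow$ (i) is the identical compactness argument via Proposition \ref{prop:joint->phi} and the closedness of $\sh$. The only cosmetic difference is that you package the paper's two successive mixings (first lowering $\gamma$, then lowering $\lambda$) into a single explicit four-term formula, which verifies just as easily.
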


\begin{proof}
Suppose that (i) holds and $0<\gamma' <\gamma$.
We denote $t:=\gamma'/\gamma$ and hence $0<t < 1$.
We have
\begin{equation*}
\Bo_{\gamma'}(k) = t \Bo_{\gamma}(k) + (1-t) \frac{1}{d} \id \, ,
\end{equation*}
meaning that $\Bo_{\gamma'}$ is a convex combination of $\Bo_{\gamma}$ and the trivial observable $\Bo_0$.
By the assumption $\Ao_\lambda$ is jointly measurable with $\Bo_{\gamma}$ and $\Ao_\lambda$ is also jointly measurable with the trivial observable $\Bo_0$ (since they commute).
If $\Co_1$ is a joint observable of $\Ao_\lambda$ and $\Bo_{\gamma}$ and $\Co_2$ is a joint observable of $\Ao_\lambda$ and $\Bo_0$, then the convex combination $t\Co_1 + (1-t)\Co_2$ is a joint observable of $\Ao_\lambda$ and  $\Bo_{\gamma'}$.
Therefore, $\Ao_\lambda$ and  $\Bo_{\gamma'}$ are jointly measurable.
We can interchange the roles of $\Ao_\lambda$ and  $\Bo_{\gamma}$ and run the same argument, hence we obtain (ii).

It is clear that (ii) implies (iii). 
Hence, to complete the proof we need to show that (iii) implies (i). 

Suppose that (iii) holds. 
We choose sequences $(\lambda_n)$ and $(\gamma_n)$ such that $0<\lambda_n<\lambda$, $0<\gamma_n<\gamma$ and $\lim_n \lambda_n=\lambda$, $\lim_n\gamma_n=\gamma$.
For each $n$, we fix a state $T_n$ such that the corresponding covariant phase space observable $\Co_{T_n}$ is a joint observable of $\Ao_{\lambda_n}$ and $\Bo_{\gamma_n}$.
The set of states $\sh$ is compact in the operator norm topology, hence the sequence $(T_n)_n$ has a convergent subsequence. 
We denote by $T$ the limit of this convergent subsequence.
Using \eqref{eq:structure1} we see that the covariant phase space observable $\Co_T$ is a joint observable of $\Ao_\lambda$ and $\Bo_\gamma$.
 Thus, (i) holds.
\end{proof}

We conclude from Proposition \ref{prop:ord. makes sense} that for every $\lambda\in[0,1]$, there is a number $\gamma_{\max}(\lambda) \geq 0$ such that $\Ao_{\lambda}$ and $\Bo_{\gamma}$ are jointly measurable iff $0\leq\gamma\leq\gamma_{\max}(\lambda)$.
Similarly, for every $\gamma\in[0,1]$, there is a number $\lambda_{\max}(\gamma)\geq 0$ such that $\Ao_{\lambda}$ and $\Bo_{\gamma}$ are jointly measurable iff $0\leq\lambda\leq\lambda_{\max}(\gamma)$.

We also know that $\gamma_{\max}(0)=\lambda_{\max}(0)=1$ (since a trivial observable is jointly measurable with any other observable) and that $\gamma_{\max}(1)=\lambda_{\max}(1)=0$ (see the discussion after Proposition \ref{prop:joint->phi}). 

\begin{proposition}
The equality $\lam_{\max} (x) = \gamma_{\max} (x)$ holds for all $x\in[0,1]$.
\end{proposition}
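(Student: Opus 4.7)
The plan is to exhibit a unitary conjugation that intertwines the pair $(\Ao_\lam,\Bo_\gamma)$ with $(\Bo_\lam,\Ao_\gamma)$ up to a relabelling of outcomes. Since joint measurability is preserved both by unitary conjugation (as recalled at the end of the joint measurability subsection) and by bijective relabelling of outcomes, this will show that the region
\[
R:=\{(\lam,\gamma)\in[0,1]^2:\Ao_\lam\text{ and }\Bo_\gamma\text{ are jointly measurable}\}
\]
is symmetric under the swap $(\lam,\gamma)\mapsto(\gamma,\lam)$, from which the equality of the two envelope functions follows by unwrapping definitions.

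The natural candidate for the unitary is the Fourier transform $\ff$. From $\psi_k=\ff^\ast\fii_k$ one immediately reads $\ff^\ast\Ao(k)\ff=\Bo(k)$, and iterating once using $\ff^2\fii_k=\fii_{-k}$ yields $\ff^\ast\Bo(k)\ff=\Ao(-k)$. Since $\Ao_\lam(j)$ and $\Bo_\gamma(k)$ differ from $\Ao(j)$ and $\Bo(k)$ only by the addition of a multiple of $\id$, these relations lift to the noisy observables as
\[
\ff^\ast\Ao_\lam(j)\ff=\Bo_\lam(j),\qquad \ff^\ast\Bo_\gamma(k)\ff=\Ao_\gamma(-k).
\]
Conjugation by $\ff^\ast$, followed by the trivial outcome relabelling $k\mapsto -k$ on the second marginal, therefore converts a joint observable of $\Ao_\lam$ and $\Bo_\gamma$ into a joint observable of $\Bo_\lam$ and $\Ao_\gamma$ (and conversely).

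Putting things together: $(\lam,\gamma)\in R$ iff $\Bo_\lam$ and $\Ao_\gamma$ are jointly measurable, iff $(\gamma,\lam)\in R$. Hence for every fixed $x\in[0,1]$,
\[
\gamma_{\max}(x)=\sup\{\gamma:(x,\gamma)\in R\}=\sup\{\gamma:(\gamma,x)\in R\}=\lam_{\max}(x),
\]
which is the desired identity. I do not foresee a real obstacle here; the only point requiring a little bookkeeping is the sign inversion on the outcome index caused by $\ff^2$ being the parity operator, but since this is a pure relabelling it has no effect on joint measurability.
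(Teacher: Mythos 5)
Your argument is correct and is essentially the paper's own proof: the authors likewise conjugate a joint observable $\Co$ of $\Ao_\lam$ and $\Bo_\gamma$ by the Fourier transform and relabel outcomes, setting $\widehat{\Co}(j,k):=\ff^\ast\Co(k,-j)\ff$, which packages your conjugation, sign flip, and marginal swap into a single formula. The identities $\ff^\ast\Ao_\lam(j)\ff=\Bo_\lam(j)$ and $\ff^\ast\Bo_\gamma(k)\ff=\Ao_\gamma(-k)$ you use are exactly the ones driving their computation, so no further comment is needed.
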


\begin{proof}
It is enough to show that $\Ao_\lam$ and $\Bo_\gamma$ are jointly measurable if and only if $\Ao_\gamma$ and $\Bo_\lam$ are such. 
Joint measurability of $\Ao_\lam$ and $\Bo_\gamma$ means that there exists an observable $\Co$ on $\Z_d \times \Z_d$ having marginals $\Ao_\lam$ and $\Bo_\gamma$, respectively. 
We set $\widehat{\Co} (j,k) := \ff^\ast \Co(k,-j) \ff$ for every $j,k\in\Z_d$. 
Then
\begin{align*}
\sum_{k\in\Z_d} \widehat{\Co} (j,k) & = \sum_{k\in\Z_d} \ff^\ast \Co(k, -j) \ff = \ff^\ast \Bo_\gamma (-j) \ff \\
& = \ff^{\ast \, 2} \Ao_\gamma (-j) \ff^2 = \Ao_\gamma (j)  \, , \\ 
\sum_{j\in\Z_d} \widehat{\Co} (j, k) & = \sum_{j\in\Z_d} \ff^\ast \Co(k , -j) \ff = \ff^\ast \Ao_\lam (k) \ff \\
& = \Bo_\lam (k) \, .
\end{align*}
We conclude that $\widehat{\Co}$ is a joint observable of $\Ao_\gamma$ and $\Bo_\lam$, hence the latter two  are jointly measurable.
\end{proof}

We will now find out the function $\gamma_{\max}(\lambda)$, or, equivalently, $\lam_{\max}(\gamma)$.
Suppose that $\lambda,\gamma\in [0,1]$ are such that $\Ao_\lambda$ and $\Bo_\gamma$ are jointly measurable.
By Proposition \ref{prop:joint->phi} this is equivalent to the existence of a vector $\phi \in \hh\otimes\hh$ satisfying
\begin{eqnarray}
\ip{\phi}{(\Ao (j)\otimes \id)\phi} & = & \lam \delta (j) + (1-\lam) \mu (j) \, ,\label{eq. 1} \\
\ip{\phi}{(\Bo(k)\otimes \id)\phi} & = & \gamma \delta (k) + (1-\gamma) \mu (k)\, ,\label{eq. 2}
\end{eqnarray}
for all $j,k\in\Z_d$ (see \eqref{eq:point-and-uniform} for the definition of $\delta$ and $\mu$).
We now give a condition on the parameters $\lam,\gamma$ which is necessary and sufficient for the existence of a vector $\phi \in \hh\otimes\hh$ satisfying the above two equations. Moreover, we show that, for the extreme values of $\lam,\gamma$, the vector $\phi$ is essentially unique.

\begin{lemma}\label{lemma:bound}
Let $\lambda,\gamma\in [0,1]$. 
Suppose there exists a vector $\phi\in\hh\otimes\hh$ satisfying \eqref{eq. 1} -- \eqref{eq. 2}. 
Then 
\begin{equation}\label{ineq. gamma}
\gamma \leq \frac{1}{d} \left[ (d-2)(1-\lam) + 2 \sqrt{(1-d)\lam^2 + (d-2)\lam + 1} \right] \, .
\end{equation}

For any choice of a unit vector $\eta \in \hh$, the vector
\begin{equation}\label{phi est.}
\phi = (\alpha_\lam \fii_0 + \beta_\lam \psi_0) \otimes \eta \, ,
\end{equation}
with
\begin{align*}
\alpha_\lam & = \frac{1}{\sqrt{d}} \left[\sqrt{(d-1)\lam + 1} - \sqrt{1-\lam}\right] \, , \qquad
\beta_\lam  =  \sqrt{1-\lam} \, ,
\end{align*}
satisfies \eqref{eq. 1} -- \eqref{eq. 2} with equality in \eqref{ineq. gamma}. 
Hence, the right hand side in \eqref{ineq. gamma} is equal to $\gamma_{\max}(\lambda)$.

If $\phi'$ is a vector satisfying \eqref{eq. 1} -- \eqref{eq. 2} with  $\gamma=\gamma_{\max}(\lambda)$, then $\phi' = (\alpha_\lam \fii_0 + \beta_\lam \psi_0) \otimes \eta'$ for some unit vector $\eta' \in \hh$.
\end{lemma}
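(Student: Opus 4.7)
\emph{Approach.} Pass to the reduced state $T := {\rm tr}_2[\kb{\phi}{\phi}] \in \sh$. Tracing out the second factor in \eqref{eq. 1}--\eqref{eq. 2} yields $\ip{\fii_j}{T\fii_j} = \lam\delta(j)+(1-\lam)/d$ and $\ip{\psi_k}{T\psi_k} = \gamma\delta(k)+(1-\gamma)/d$ for all $j,k\in\Z_d$. Writing $a := ((d-1)\lam+1)/d$ and $b := (1-\lam)/d$, the diagonal of $T$ in the basis $\{\fii_j\}$ is $(a,b,\ldots,b)$, and \eqref{ineq. gamma} is equivalent to an upper bound on the single matrix entry $\ip{\psi_0}{T\psi_0}$.

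\emph{Main estimate.} Using $\psi_0 = (1/\sqrt{d})\sum_j \fii_j$, expand $\ip{\psi_0}{T\psi_0} = (1/d) \sum_{j,k} T_{jk}$ with $T_{jk} := \ip{\fii_j}{T\fii_k}$. By Hermiticity the double sum is real, and positivity of $T$ gives the pointwise Cauchy--Schwarz bound $|T_{jk}| \leq \sqrt{T_{jj}T_{kk}}$, whence
\begin{equation*}
\sum_{j,k} T_{jk} \leq \sum_{j,k} \sqrt{T_{jj}T_{kk}} = \Bigl(\sum_j \sqrt{T_{jj}}\Bigr)^{2} = \bigl(\sqrt{a}+(d-1)\sqrt{b}\bigr)^{2}.
\end{equation*}
Substituting the values of $a$ and $b$ and simplifying gives exactly \eqref{ineq. gamma}.

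\emph{Equality and uniqueness.} For saturation, take $\phi = v\otimes\eta$ with $v = \alpha_\lam \fii_0 + \beta_\lam \psi_0$. Using $\sum_{j\in\Z_d}\fii_j = \sqrt{d}\,\psi_0$ together with the definitions of $\alpha_\lam,\beta_\lam$, one verifies $v = \sqrt{a}\,\fii_0 + \sqrt{b}\sum_{j\ne 0}\fii_j$, so the diagonal of $T = \kb{v}{v}$ is $(a,b,\ldots,b)$ and \eqref{eq. 1} holds; a direct computation of $|\ip{\psi_k}{v}|^2$ for $k=0$ and $k\ne 0$ recovers \eqref{eq. 2} with $\gamma$ equal to the right-hand side of \eqref{ineq. gamma}. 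For uniqueness, if $\phi'$ saturates the bound then equality in the Cauchy--Schwarz step forces $T'_{jk} = \sqrt{T'_{jj}T'_{kk}}$ (real and non-negative) for all $j,k$, so $T' = \kb{v'}{v'}$ is rank one with $v'_j = \sqrt{T'_{jj}}$ up to a single overall phase, i.e.\ $v' = \alpha_\lam \fii_0 + \beta_\lam \psi_0$; uniqueness of the Schmidt decomposition for rank-one reductions then yields $\phi' = v' \otimes \eta'$ for some unit $\eta' \in \hh$. The expected bookkeeping obstacle is the algebraic simplification that converts the Cauchy--Schwarz bound $(\sqrt{a}+(d-1)\sqrt{b})^{2}$ into the closed form on the right-hand side of \eqref{ineq. gamma}; once this is done, saturation by the stated $v$ and the uniqueness step follow from standard considerations, with the degenerate cases $\lam=0$ and $\lam=1$ being consistent with the general formulas.
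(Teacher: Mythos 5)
Your argument is correct and is essentially the paper's own proof in a different guise: the paper expands $\phi=\sum_i\fii_i\otimes\xi_i$ and maximizes $\frac{1}{d}\no{\sum_i\xi_i}^2$ subject to the norms fixed by \eqref{eq. 1}, which is exactly your estimate $\sum_{j,k}T_{jk}\leq\bigl(\sum_j\sqrt{T_{jj}}\bigr)^2$ once one notes that the Gram matrix $[\ip{\xi_k}{\xi_j}]$ of the purification components \emph{is} the reduced state $T$, so your pointwise Cauchy--Schwarz step and the paper's ``all $\eta_i$ equal'' maximization are the same inequality with the same equality case. Your uniqueness step via the rank-one reduced state and the Schmidt decomposition likewise matches the paper's conclusion that $\phi'=(\alpha_\lam\fii_0+\beta_\lam\psi_0)\otimes\eta'$.
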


\begin{proof}
Suppose $\phi\in\hh\otimes\hh$ satisfies \eqref{eq. 1} -- \eqref{eq. 2}. 
We write $\phi$ in the form
$$
\phi = \sum_{i\in\Z_d} \fii_i \otimes \xi_i \, ,
$$
where $\{ \xi_i\}_{i\in\Z_d}$ are vectors in $\hh$. 
From \eqref{eq. 1} it follows that
\begin{equation*}
\no{\xi_i}^2 = \lam \delta (i) + (1-\lam) \mu (i) \, ,
\end{equation*}
hence there exist unit vectors $\{ \eta_i\}_{i\in\Z_d}$ such that
$$
\xi_0 = \sqrt{\frac{(d-1)\lam +1}{d}} \, \eta_0 \, , \qquad \xi_i = \sqrt{\frac{1-\lam}{d}} \, \eta_i \quad \forall i\neq 0 \, .
$$
On the other hand, we have
\begin{eqnarray*}
\ip{\phi}{(\Bo(k)\otimes \id)\phi} & = & \sum_{i,j\in\Z_d} \ip{\fii_j\otimes\xi_j}{(\kb{\psi_k}{\psi_k} \otimes \id) \fii_i\otimes\xi_i} \\
& = & \frac{1}{d} \sum_{i,j\in\Z_d} \omega^{jk} \omega^{-ik} \ip{\xi_j}{\xi_i} \, ,
\end{eqnarray*}
so, by \eqref{eq. 2}, we must have
$$
\frac{1}{d} \sum_{i,j\in\Z_d} \omega^{jk} \omega^{-ik} \ip{\xi_j}{\xi_i} = \gamma \delta (k) + (1-\gamma) \mu (k) \, .
$$
This equation, evaluated at $k=0$, gives
\begin{eqnarray*}
\left( 1 - \frac{1}{d} \right) \gamma + \frac{1}{d} & = & \frac{1}{d} \no{\sum_{i\in\Z_d} \xi_i}^2 \\
& = & \frac{1}{d^2} \no{\sqrt{(d-1)\lam +1} \, \eta_0 + \sum_{i\in\Z_d, \, i\neq 0} \sqrt{1-\lam} \, \eta_i}^2 \, .
\end{eqnarray*}
The maximum value of $\gamma$ is then achieved when the right hand side of this equation is maximal, i.e., when there exists a unit vector $\eta\in\hh$ such that $\eta_i = \eta$ $\forall i\in\Z_d$. The corresponding maximum value $\gamma_{\max}$ of $\gamma$ is given by
$$
\left( 1 - \frac{1}{d} \right) \gamma_{\max} + \frac{1}{d} = \frac{1}{d^2} \left( \sqrt{(d-1)\lam +1} + (d-1) \sqrt{1-\lam} \right)^2 \, ,
$$
i.e.,
\begin{equation*}
\gamma_{\max} = \frac{1}{d} \left[ (d-2)(1-\lam) + 2 \sqrt{(1-d)\lam^2 + (d-2)\lam + 1} \right] \, .
\end{equation*}

In order to show that, if the sequence $\{\xi_i\}_{i\in\Z_d}$ is chosen as above, then the corresponding vector
\begin{eqnarray*}
\phi & = & \left( \sqrt{\frac{(d-1)\lam +1}{d}} \, \fii_0 + \sqrt{\frac{1-\lam}{d}} \sum_{i\in\Z_d, \, i\neq 0} \fii_i \right) \otimes \eta \\
& = & (\alpha_\lam \fii_0 + \beta_\lam \psi_0) \otimes \eta
\end{eqnarray*}
satisfies also \eqref{eq. 2} with $\gamma=\gamma_{\max}$ (and thus the maximum is indeed achieved by $\phi$), we evaluate
\begin{eqnarray*}
\ip{\phi}{(\Bo(k)\otimes \id)\phi} & = & \ip{\phi}{(\kb{\psi_k}{\psi_k} \otimes \id)\phi} \\
& = & \left|\ip{\psi_k}{\alpha_\lam \fii_0 + \beta_\lam \psi_0}\right|^2 \\
& = & \left( \frac{\alpha_\lam}{\sqrt{d}} + \beta_\lam \delta(k)\right)^2 \\
& = & \frac{\alpha^2_\lam}{d} + \left(\beta^2_\lam + 2 \frac{\alpha_\lam \beta_\lam}{\sqrt{d}} \right) \delta(k) \\
& = & (1-\gamma_{\max}) \mu (k) + \gamma_{\max} \delta (k) \, ,
\end{eqnarray*}
which is \eqref{eq. 2}.
\end{proof}

As a consequence of the above discussion, we obtain an inequality for the unsharpnesses of two jointly measurable observables $\Ao_\lambda$ and $\Bo_\gamma$.

\begin{figure}
\begin{center}
\includegraphics[width=4.0cm]{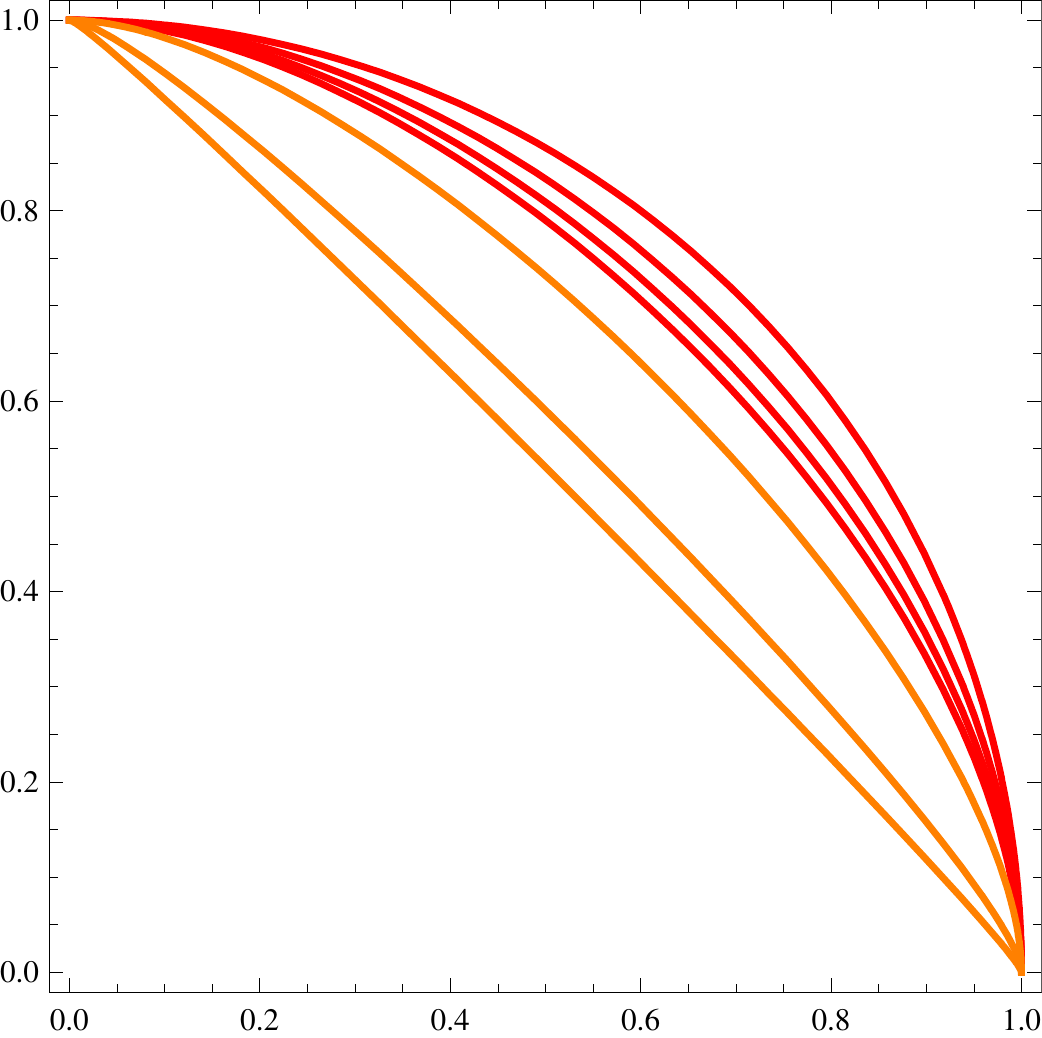}
\end{center}
\caption{The boundary curve $\lambda\mapsto\gamma_{\max}(\lambda)$ for $d=2,3,4,5$ (red color) and for $d=10,100,1000$ (orange color).}
\label{fig:boundary}
\end{figure}

\begin{proposition}\label{Prop:gammamax}
Two observables $\Ao_\lambda$ and $\Bo_\gamma$ are jointly measurable if and only if 
\begin{equation}\label{ineq. gamma 2}
\gamma \leq \gamma_{\max}(\lambda)=\frac{1}{d} \left[ (d-2)(1-\lam) + 2 \sqrt{(1-d)\lam^2 + (d-2)\lam + 1} \right]  \, ,
\end{equation}
(or, equivalently, its modified form under the exchange $\gamma\leftrightarrow\lam$). 

If $\gamma=\gamma_{\max}(\lambda)$, then $\Ao_\lambda$ and $\Bo_\gamma$ have a unique joint observable.
This unique joint observable is the covariant phase space observable $\Co_T$ defined by the state
$$
T = \kb{\chi_\lam}{\chi_\lam} \, , \qquad \chi_\lam = \alpha_\lam \fii_0 + \beta_\lam \psi_0 \, ,
$$
with
\begin{align}\label{eq:ab}
\alpha_\lambda &=  \frac{1}{\sqrt{d}} \left[ \sqrt{(d-1) \lambda + 1} - \sqrt{1-\lambda} \right] \, , \qquad
\beta_\lambda  =  \sqrt{1-\lambda} \, .
\end{align}
\end{proposition}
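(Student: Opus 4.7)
The plan is to assemble Proposition \ref{Prop:gammamax} from three pieces that are already in place: Proposition \ref{prop:joint->phi} (the $\phi$-characterization of joint measurability), Lemma \ref{lemma:bound} (the sharp inequality and the uniqueness of the extremal $\phi$), and Remark \ref{prop:unique} (the rank-one criterion that upgrades uniqueness among covariant phase space observables to uniqueness among all joint observables). Essentially every ingredient has been produced; the proposition is a matter of bookkeeping.

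For the equivalence in \eqref{ineq. gamma 2}, I would argue as follows. By Proposition \ref{prop:joint->phi}, joint measurability of $\Ao_\lambda$ and $\Bo_\gamma$ is equivalent to the existence of a vector $\phi \in \hh \otimes \hh$ satisfying \eqref{eq. 1}--\eqref{eq. 2}. Lemma \ref{lemma:bound} shows that such a $\phi$ forces $\gamma \leq \gamma_{\max}(\lambda)$, giving the ``only if'' direction. For the ``if'' direction, the explicit vector $\phi = (\alpha_\lambda \fii_0 + \beta_\lambda \psi_0) \otimes \eta$ in the lemma realizes the boundary case $\gamma = \gamma_{\max}(\lambda)$; for any $\gamma' < \gamma_{\max}(\lambda)$ joint measurability then follows at once from Proposition \ref{prop:ord. makes sense}.

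For the uniqueness claim at the boundary, I would first pin down the unique \emph{covariant} joint observable. By Proposition \ref{prop:if-jm-then-cov}, any jointly measurable pair admits a covariant phase space joint observable $\Co_T$, and $T$ is recovered (up to the bijection $T \leftrightarrow \Co_T$) from any purification $\phi$ of $T$ via \eqref{eq:structure2}. The uniqueness statement in Lemma \ref{lemma:bound} says that at $\gamma = \gamma_{\max}(\lambda)$ every such $\phi$ has the product form $\phi = \chi_\lambda \otimes \eta'$ with $\chi_\lambda = \alpha_\lambda \fii_0 + \beta_\lambda \psi_0$. Taking the partial trace over the second factor yields $T = \mathrm{tr}_2 [\kb{\phi}{\phi}] = \kb{\chi_\lambda}{\chi_\lambda}$, independent of $\eta'$. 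Thus $\Co_T$ is the unique covariant phase space joint observable, and $T$ has the claimed form.

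Finally, to promote uniqueness from the covariant class to all joint observables, I would invoke Remark \ref{prop:unique}. Its hypotheses are precisely (i) uniqueness of the covariant phase space joint observable, which we have just established, and (ii) $T^2 = T$, which holds because $T = \kb{\chi_\lambda}{\chi_\lambda}$ is a rank-one projection (here one can also verify $\no{\chi_\lambda} = 1$ from the explicit $\alpha_\lambda, \beta_\lambda$ in \eqref{eq:ab} together with $\ip{\fii_0}{\psi_0} = 1/\sqrt{d}$, but normalization is automatic since $\phi = \chi_\lambda \otimes \eta'$ is forced to be unit by \eqref{eq. 1} with $j=0$ summed over $j$, or simply by the purification picture). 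The only mildly delicate step is the last one: one must make sure that the $\phi$ supplied to Lemma \ref{lemma:bound} is genuinely a unit purification of $T$, but this is built into the equivalence (i)$\Leftrightarrow$(iii) of Proposition \ref{prop:joint->phi}. With these steps the proposition follows.
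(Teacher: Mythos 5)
Your proposal is correct and follows essentially the same route as the paper's own proof: both directions of the inequality via Proposition \ref{prop:joint->phi}, Lemma \ref{lemma:bound}, and Proposition \ref{prop:ord. makes sense}, then uniqueness of the covariant joint observable from the uniqueness clause of Lemma \ref{lemma:bound} applied to a purification of $T$, upgraded to full uniqueness via Remark \ref{prop:unique} using $T^2=T$. No gaps; the bookkeeping is exactly as in the paper.
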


\begin{proof}
If $\Ao_\lam$ and $\Bo_\gamma$ are jointly measurable, then the inequality follows from Proposition \ref{prop:joint->phi} and Lemma \ref{lemma:bound}. 
Conversely, if $\gamma_{\max}$ is given by the right hand side of \eqref{ineq. gamma 2}, then the pair $\Ao_\lam$ and $\Bo_{\gamma_{\max}}$ are jointly measurable again by an application of Proposition \ref{prop:joint->phi} and Lemma \ref{lemma:bound}. 
Then, $\Ao_\lam$ and $\Bo_\gamma$ are jointly measurable by Proposition \ref{prop:ord. makes sense}.

Now suppose $\lam$ and $\gamma$ achieve the bound \eqref{ineq. gamma 2}, and let $\Co_T$ be a covariant joint observable of $\Ao_\lam$ and $\Bo_\gamma$. 
Pick $\phi\in\hh\otimes\hh$ such that $T={\rm tr}_2 [\kb{\phi}{\phi}]$. 
As $\phi$ satisfies \eqref{eq. 1} -- \eqref{eq. 2} with $\gamma = \gamma_{\max}$, by Lemma \ref{lemma:bound} it must be given by \eqref{phi est.} for some choice of a unit vector $\eta \in \hh$, and $T={\rm tr}_2 [\kb{\phi}{\phi}] = \kb{\chi_\lam}{\chi_\lam}$, with $\chi_\lam$ as in the statement of the proposition.

Finally, we need to prove that $\Co_T$ is the unique joint observable (and not only unique among covariant phase space observables). 
We notice that $T^2=T$, and the claim thus follows from Remark \ref{prop:unique}.
\end{proof}

The graph of the function $\lambda\mapsto\gamma_{\max}(\lambda)$ is a part of an ellipse.
In Fig.~\ref{fig:boundary} we have depicted it for $d=2,3,4,5,10,100,1000$.

\begin{example}
Suppose that two jointly measurable observables $\Ao_\lambda$ and $\Bo_\gamma$ are `equally unsharp' but as close to $\Ao$ and $\Bo$ as possible, i.e., 
\begin{equation*}
\gamma=\lambda = \lambda_{\max}(\gamma) \, .
\end{equation*}
In this case Proposition \ref{Prop:gammamax} gives
\begin{equation}
\gamma=\lambda = \frac{d+\sqrt{d}-2}{2(d-1)} = \half \left( 1+ \frac{1}{1+\sqrt{d}} \right) \, .
\end{equation}
The observables $\Ao_\lambda$ and $\Bo_\gamma$ then have a unique joint observable, which is the covariant phase space observable $\Co_T$ associated to the state $T=\kb{\chi}{\chi}$, with
$$
\chi =\sqrt{\frac{\sqrt{d}}{2(1+\sqrt{d})}} \, (\fii_0 + \psi_0) \, .
$$
The vector state $\chi$ is hence an equal superposition of the vector states $\fii_0$ and $\psi_0$.
\end{example}

By a direct calculation one can verify that the inequality \eqref{ineq. gamma 2} can be rewritten in the following equivalent form which is symmetric in $\lambda$ and $\gamma$.

\begin{proposition}\label{prop:symmetric}
Two observables $\Ao_\lambda$ and $\Bo_\gamma$ are jointly measurable iff
\begin{equation}\label{eq:region}
\textrm{either} \quad \gamma+\lambda\leq 1 \quad \textrm{or} \quad \gamma^2 + \lambda^2 +  \frac{2(d-2)}{d} (1- \gamma)(1- \lambda)  \leq  1 \, .
\end{equation}
\end{proposition}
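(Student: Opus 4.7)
The plan is to reformulate the joint-measurability criterion of Proposition \ref{Prop:gammamax} as an inequality between two explicit quantities, and then show that each of the two clauses in \eqref{eq:region} corresponds to a half of that inequality. First I set
\[
A := d\gamma - (d-2)(1-\lam) \, , \qquad B := 2\sqrt{(1-d)\lam^2 + (d-2)\lam + 1} \geq 0 \, ,
\]
so that by Proposition \ref{Prop:gammamax} the observables $\Ao_\lam$ and $\Bo_\gamma$ are jointly measurable if and only if $A \leq B$.

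The computational core of the argument is the polynomial identity
\[
A^2 - B^2 = d^2 \left[ \gamma^2 + \lam^2 + \frac{2(d-2)}{d}(1-\gamma)(1-\lam) - 1 \right] \, ,
\]
which I verify by expanding both sides as quadratic polynomials in $\lam$ and $\gamma$ and matching coefficients (the constants give $(d-2)^2-4=d(d-4)$, the $\lam$-coefficient is $-2d(d-2)$, and so on). Hence the ellipse clause of \eqref{eq:region} is equivalent to $A^2 \leq B^2$, i.e., to $|A| \leq B$.

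For the ``if'' direction I handle the two clauses separately. If $|A| \leq B$, then $A \leq B$ and Proposition \ref{Prop:gammamax} yields joint measurability. If instead $\gamma+\lam \leq 1$, I exhibit the explicit joint observable
\[
\Co(j,k) := \frac{\lam}{d}\, \Ao(j) + \frac{\gamma}{d}\, \Bo(k) + \frac{1-\lam-\gamma}{d^2}\, \id \, ;
\]
each $\Co(j,k)$ is positive because all three coefficients are non-negative, and a direct summation shows that its marginals are precisely $\Ao_\lam$ and $\Bo_\gamma$. For the ``only if'' direction, assume $A \leq B$ and argue by cases. If $A \geq 0$, then $0 \leq A \leq B$ gives $A^2 \leq B^2$, which is the ellipse clause. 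If $A < 0$, then $d\gamma < (d-2)(1-\lam) \leq d(1-\lam)$, so $\gamma + \lam < 1$, which is the line clause.

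The argument is short, and I expect no real obstacle: the only even mildly tedious step is the polynomial identity connecting $A^2 - B^2$ with the ellipse expression, and that is routine arithmetic.
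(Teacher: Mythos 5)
Your proof is correct and carries out exactly what the paper dismisses as "a direct calculation": it reduces to Proposition \ref{Prop:gammamax}, writes that criterion as $A\leq B$, and verifies the polynomial identity $A^2-B^2 = d^2\bigl[\gamma^2+\lambda^2+\tfrac{2(d-2)}{d}(1-\gamma)(1-\lambda)-1\bigr]$, which indeed holds. The only (harmless) deviation is that for the clause $\gamma+\lambda\leq 1$ you exhibit an explicit joint observable rather than checking algebraically that this clause also implies $A\leq B$; both work.
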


The second inequality in \eqref{eq:region} describes a full ellipse. 
Therefore, the first condition $\gamma+\lambda\leq 1$ is needed to ignore the lower part of the ellipse, which is not a correct boundary for joint measurability.
This is depicted in Fig.~\ref{fig:ellipse}.

\begin{figure}
\begin{center}
\includegraphics[width=4.0cm]{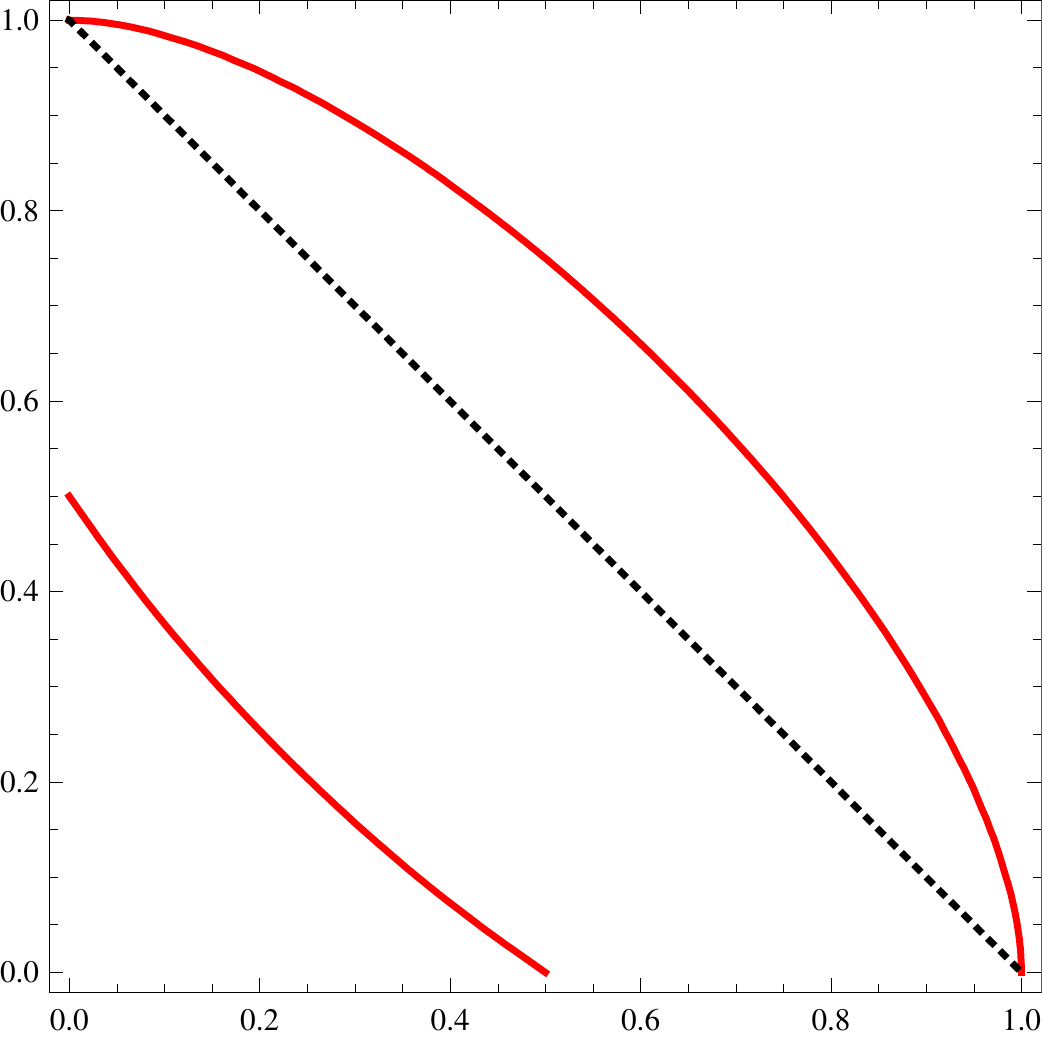}
\end{center}
\caption{The ellipse (red) and the line (dashed) from \eqref{eq:region} for $d=8$. 
Only the upper side of the ellipse is relevant for the joint measurability.}
\label{fig:ellipse}
\end{figure}

Let us notice that for $d=2$ the latter inequality in \eqref{eq:region} becomes $\gamma^2 + \lambda^2\leq 1$, and then the first condition is redundant.
We thus recover the single condition stated in \eqref{eq:ur-qubit-2} and first proved in \cite{Busch86}.
Also, for $d=3$ and $d=4$ a direct calculation shows that the the first condition in \eqref{eq:region} is redundant and the latter quadratic inequality is necessary and sufficient for the joint measurability.
For $d\geq 5$ we need both conditions in \eqref{eq:region}.

By inspecting the function $\lambda\mapsto\gamma_{\max}(\lambda)$ we see that for every $\epsilon>0$, there is a pair of observables $\Ao_\lambda$ and $\Bo_\gamma$ such that they are not jointly measurable and $\lambda + \gamma < 1+ \epsilon$.
Thus, the criterion $\gamma+\lambda\leq 1$ is the best \emph{sufficient} condition for joint measurability which is linear and symmetric in $\lambda$ and $\gamma$.

The best linear and symmetric \emph{necessary} condition for joint measurablity is achived by taking the tangent of the boundary curve in the point where it crosses the line $\gamma=\lambda$.
In this way, we obtain the following conclusion.

\begin{proposition}\label{prop:linear}
If $\Ao_\lambda$ and $\Bo_\gamma$ are jointly measurable, then
\begin{equation}
\gamma+\lambda \leq 1+ \frac{\sqrt{d}-1}{d-1} \, .
\end{equation}
\end{proposition}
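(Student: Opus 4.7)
The plan is to exploit the symmetric characterization of joint measurability from Proposition~\ref{prop:symmetric} rather than the square-root form in Proposition~\ref{Prop:gammamax}, and then reduce to a one-variable quadratic inequality by passing to the symmetric functions $s=\gamma+\lambda$ and $p=\gamma\lambda$.

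First I would dispose of the easy case $\gamma+\lambda\leq 1$, where the desired bound is immediate since $1\leq 1+\frac{\sqrt{d}-1}{d-1}$. In the remaining case $\gamma+\lambda>1$, Proposition~\ref{prop:symmetric} forces the ellipse inequality
$$
\gamma^2 + \lambda^2 + \frac{2(d-2)}{d}(1-\gamma)(1-\lambda) \leq 1 \, .
$$
Substituting $\gamma^2+\lambda^2 = s^2 - 2p$ and $(1-\gamma)(1-\lambda) = 1-s+p$ and clearing denominators, I would rearrange to the equivalent bilinear form
$$
ds^2 - 4p - 2(d-2)s + (d-4) \leq 0 \, .
$$

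The key step is to eliminate $p$ via AM--GM. Since $p\leq s^2/4$ and $p$ enters with a negative coefficient, replacing $-4p$ by the (smaller) quantity $-s^2$ preserves the inequality and produces the pure quadratic
$$
(d-1)s^2 - 2(d-2)s + (d-4) \leq 0 \, .
$$
The discriminant simplifies cleanly to $4d$, so the roots are $s_\pm=\frac{(d-2)\pm\sqrt{d}}{d-1}$ and $s$ must lie in $[s_-,s_+]$. Because $s_-\leq 1$ (equivalent to $-1\leq\sqrt{d}$) and we are in the case $s>1$, the effective constraint is $s\leq s_+=1+\frac{\sqrt{d}-1}{d-1}$, which is the desired inequality.

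The only point needing care is the \emph{direction} of the AM--GM substitution: since $p$ appears with a negative coefficient, one must bound $p$ from above (not below) for the substitution to yield a valid weakening of the inequality. Conceptually, this algebraic reduction is the substitute for the tangent-line argument sketched before the proposition: the boundary curve is symmetric under $\lambda\leftrightarrow\gamma$ and concave, so the tangent at the diagonal has slope $-1$ and realizes the sharp linear symmetric bound. Equality in AM--GM requires $\gamma=\lambda$, which matches the symmetric equilibrium point $\gamma=\lambda=\frac{d+\sqrt{d}-2}{2(d-1)}$ computed in the example following Proposition~\ref{Prop:gammamax}, confirming that the bound is tight.
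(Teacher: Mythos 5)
Your proof is correct, and it takes a genuinely different route from the paper's. The paper does not write out a formal proof at all: it obtains the bound geometrically, by observing that the boundary curve $\lambda\mapsto\gamma_{\max}(\lambda)$ is an arc of an ellipse symmetric under $\lambda\leftrightarrow\gamma$, and taking the tangent line (of slope $-1$) at the point where the curve meets the diagonal $\gamma=\lambda$; the bound is then twice the value $\lambda=\gamma=\tfrac{d+\sqrt{d}-2}{2(d-1)}$ computed in the preceding example. You instead start from the symmetric characterization of Proposition~\ref{prop:symmetric}, pass to $s=\gamma+\lambda$, $p=\gamma\lambda$, and eliminate $p$ by the estimate $p\leq s^2/4$, which is legitimate precisely because $p$ enters the rearranged inequality $ds^2-4p-2(d-2)s+(d-4)\leq 0$ with a negative coefficient; the resulting quadratic $(d-1)s^2-2(d-2)s+(d-4)\leq 0$ has discriminant $4d$ and larger root $1+\tfrac{\sqrt{d}-1}{d-1}$, as you say, and the case $\gamma+\lambda\leq 1$ is trivial. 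Your argument is fully rigorous and self-contained, whereas the paper's tangent argument implicitly relies on concavity of the boundary curve, which is asserted only via the picture; on the other hand the geometric argument makes it immediately visible why the bound is the \emph{best} symmetric linear necessary condition. The closing remark that equality in AM--GM forces $\gamma=\lambda$ correctly identifies the point of tangency and confirms sharpness. (A cosmetic note: the interval $[s_-,s_+]$ discussion is not needed, since the quadratic being nonpositive already gives $s\leq s_+$ directly.)
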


One can also see Proposition \ref{prop:linear} in the opposite order; if $\gamma+\lambda > 1+ \frac{\sqrt{d}-1}{d-1}$, then $\Ao_\lambda$ and $\Bo_\gamma$ are not jointly measurable.
In Fig.~\ref{fig:lines} we have depicted the linear necessary and sufficient conditions in the case $d=10$.

\begin{figure}
\begin{center}
\includegraphics[width=4.0cm]{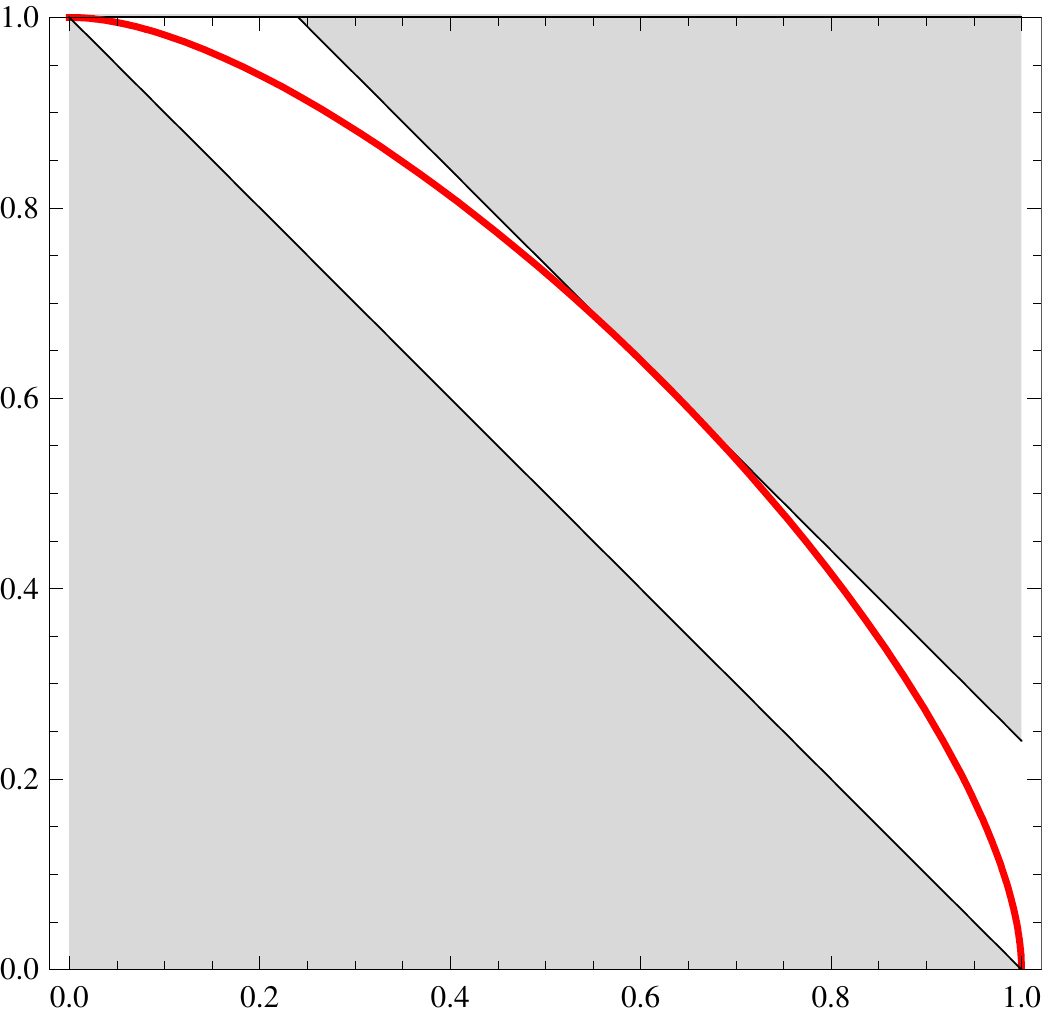}
\end{center}
\caption{In this picture $d=10$. The gray regions represent the necessary and sufficient linear conditions. In the white stripe one has to invoke the quadratic ellipse criterion, whereas otherwise the joint measurability can be deduced from the simple linear criteria.}
\label{fig:lines}
\end{figure}

\subsection{Non-covariant observables}

So far, we have concentrated on covariant observables $\Ao_\Lambda$ and $\Bo_\Gamma$. 
Let us have a short look on a class of non-covariant observables.

Let $p$ and $r$ be two probability distributions on $\Z_d$.
We define
\begin{equation*}
\Ao_{\lambda;p}(j) := \lambda \Ao(j) + (1-\lambda) p(j) \id 
\end{equation*}
and
\begin{equation*}
\Bo_{\gamma;r}(k) := \gamma \Bo(k) + (1-\gamma) r(k) \id \, .
\end{equation*}
It is straightforward to verify that $\Ao_{\lambda;p}$ is $U$-covariant and $V$-invariant iff $p$ is the uniform distribution on $\Z_d$, in which case $\Ao_{\lambda;p}=\Ao_{\lambda}$.
(Analogous statement holds for $\Bo_{\gamma;r}$).
The following result is a generalization of Proposition 5 in \cite{BuHe08}.

\begin{proposition}\label{prop:non}
If $\Ao_{\lambda;p}$ and $\Bo_{\gamma;r}$ are jointly measurable, then $\Ao_{\lambda}$ and $\Bo_{\gamma}$ are jointly measurable.
\end{proposition}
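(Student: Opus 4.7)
The plan is to mimic the averaging argument used in the proof of Proposition \ref{prop:if-jm-then-cov}, exploiting the fact that the identity operator is invariant under conjugation and that both $p$ and $r$ sum to $1$, so the ``inhomogeneous'' noise terms $(1-\lambda)p(j)\id$ and $(1-\gamma)r(k)\id$ get averaged into the uniform noise $(1-\lambda)\frac{1}{d}\id$ and $(1-\gamma)\frac{1}{d}\id$.

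More concretely, I start with a joint observable $\Co$ of $\Ao_{\lambda;p}$ and $\Bo_{\gamma;r}$ and define its Weyl--Heisenberg average
\begin{equation*}
\widetilde{\Co}(j,k) := \frac{1}{d^2}\sum_{x,y\in\Z_d} U_x^\ast V_y^\ast \Co(j+x,k+y) V_y U_x\, .
\end{equation*}
Positivity and normalization are immediate ($\id$ is invariant under each $V_y U_x$-conjugation), so $\widetilde{\Co}$ is a bona fide observable on $\Z_d\times\Z_d$.

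The key computation is the first marginal. Summing over $k$ and using $\sum_k \Co(j+x,k+y)=\Ao_{\lambda;p}(j+x)$, together with the $U$-covariance and $V$-invariance of $\Ao$ from \eqref{eq:cov-A}, gives
\begin{equation*}
\sum_{k\in\Z_d} \widetilde{\Co}(j,k) = \frac{1}{d^2}\sum_{x,y\in\Z_d}\bigl[\lambda\,\Ao(j) + (1-\lambda)p(j+x)\,\id\bigr] = \lambda\,\Ao(j) + \frac{1-\lambda}{d}\,\id = \Ao_\lambda(j)\, ,
\end{equation*}
where the crucial point is that $\frac{1}{d}\sum_x p(j+x) = \frac{1}{d}\sum_x p(x) = \frac{1}{d}$ because $p$ is a probability distribution. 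The analogous computation for the second marginal, using \eqref{eq:cov-B} and $\sum_y r(k+y)=1$, produces $\Bo_\gamma(k)$. Hence $\widetilde{\Co}$ is a joint observable of $\Ao_\lambda$ and $\Bo_\gamma$.

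There is essentially no obstacle here; the only thing one has to be careful about is the bookkeeping in the marginal computation, in particular that the noise terms $p(j+x)\id$ and $r(k+y)\id$ are untouched by the conjugations $U_x^\ast V_y^\ast(\cdot)V_y U_x$, so after averaging over $x$ (resp.\ $y$) they contribute exactly $\frac{1}{d}\id$ rather than anything depending on $j$ or $k$. This is the mechanism that turns the non-covariant noises $(1-\lambda)p$ and $(1-\gamma)r$ into the uniform noises characterizing $\Ao_\lambda$ and $\Bo_\gamma$.
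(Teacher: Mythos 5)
Your proposal is correct and follows essentially the same route as the paper: both define the Weyl--Heisenberg average $\widetilde{\Co}(j,k) = \frac{1}{d^2}\sum_{x,y} U_x^\ast V_y^\ast \Co(j+x,k+y) V_y U_x$ and verify via the covariance/invariance relations \eqref{eq:cov-A}--\eqref{eq:cov-B} that the non-uniform noise terms average to $\frac{1-\lambda}{d}\id$ and $\frac{1-\gamma}{d}\id$. No gaps.
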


\begin{proof}
Suppose that $\Ao_{\lambda;p}$ and $\Bo_{\gamma;r}$ are jointly measurable and let $\Co$ be their joint observable. As in the proof of Proposition \ref{prop:if-jm-then-cov}, we define the observable $\widetilde{\Co}$, given by
$$
\widetilde{\Co}(j,k) := \frac{1}{d^2}\sum_{x,y\in\Z_d} U_x^\ast V_y^\ast  \Co(j+x,k+y) V_y U_x \, , \qquad j,k\in\Z_d \, .
$$
For each $j$, we have
\begin{eqnarray*}
\sum_{k\in\Z_d} \widetilde{\Co}(j,k) &=& \frac{1}{d^2}  \sum_{x,y\in\Z_d} U_x^\ast V_y^\ast  \sum_{k\in\Z_d} \Co(j+x,k+y) V_y U_x \\
&=&  \frac{1}{d^2}  \sum_{x,y\in\Z_d} U_x^\ast V_y^\ast  \Ao_{\lambda;p}(j+x) V_y U_x \\
&=& \lambda \Ao(j) + (1-\lambda) \frac{1}{d^2}  \sum_{x,y\in\Z_d} p(j+x) \id \\
&=& \Ao_{\lambda}(j) \, .
\end{eqnarray*}
In a similar way we obtain $\sum_{j\in\Z_d} \widetilde{\Co}(j,k) = \Bo_{\gamma}(k)$ for every $k$.
Therefore, $\widetilde{\Co}$ is a joint observable for $\Ao_{\lambda}$ and $\Bo_{\gamma}$.
\end{proof}

As a consequence of Propositions \ref{Prop:gammamax} and \ref{prop:non} we conclude the following necessary criterion for joint measurability.

\begin{corollary}\label{cor:nec}
If two observables $\Ao_{\lambda;p}$ and $\Bo_{\gamma;r}$ are jointly measurable, then 
\begin{equation}
\gamma \leq \frac{1}{d} \left[ (d-2)(1-\lam) + 2 \sqrt{(1-d)\lam^2 + (d-2)\lam + 1} \right]  \, .
\end{equation}
\end{corollary}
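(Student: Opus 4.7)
The plan is to combine the two preceding results in sequence: Proposition \ref{prop:non} reduces the problem to the covariant case, and Proposition \ref{Prop:gammamax} then supplies the desired inequality.

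More concretely, assume that $\Ao_{\lambda;p}$ and $\Bo_{\gamma;r}$ are jointly measurable. By Proposition \ref{prop:non}, this immediately implies that $\Ao_\lambda$ and $\Bo_\gamma$ are jointly measurable as well (the averaging construction in that proof symmetrizes the noise to the uniform distribution). Then Proposition \ref{Prop:gammamax} applied to the pair $\Ao_\lambda, \Bo_\gamma$ yields exactly the bound
\begin{equation*}
\gamma \leq \frac{1}{d}\left[(d-2)(1-\lambda) + 2\sqrt{(1-d)\lambda^2 + (d-2)\lambda + 1}\right].
\end{equation*}

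There is no real obstacle here: the corollary is essentially a bookkeeping statement combining the two propositions. The only thing worth noting is that the inequality depends solely on $\lambda$ and $\gamma$ and is insensitive to $p$ and $r$, which is precisely the content of Proposition \ref{prop:non}. So the whole proof should be two lines: invoke Proposition \ref{prop:non} to pass from $(\Ao_{\lambda;p}, \Bo_{\gamma;r})$ to $(\Ao_\lambda, \Bo_\gamma)$, then invoke Proposition \ref{Prop:gammamax} to get the stated bound.
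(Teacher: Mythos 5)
Your proposal is correct and is exactly the paper's argument: the corollary is stated there as an immediate consequence of Proposition \ref{prop:non} (which reduces to the covariant pair $\Ao_\lambda$, $\Bo_\gamma$) followed by Proposition \ref{Prop:gammamax} (which gives the bound). Nothing further is needed.
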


A necessary and sufficient inequality for the joint measurability of $\Ao_{\lambda;p}$ and $\Bo_{\gamma;r}$ must contain also $p$ and $r$ in a form or in another.
It is thus clear that Corollary \ref{cor:nec} does not give a sufficient condition.
A necessary and sufficient condition in the case $d=2$ has been obtained in \cite{StReHe08, BuSc10, YuLiLiOh10}.

We remark that a general necessary condition for the joint measurability of two observables on a finite dimensional system has been presented in \cite{MiIm08}.
A comparison to Proposition \ref{Prop:gammamax} shows that this condition is not sufficient.  
We leave it as an open problem to find a necessary and sufficient condition for the joint measurability of $\Ao_{\lambda;p}$ and $\Bo_{\gamma;r}$.

\section{Informational completeness}\label{sec:ic}

We will now study the informational completeness of joint observables of $\Ao_\lam$ and $\Bo_\gamma$.
Let us first recall that the informational completeness of a covariant phase space observable $\Co_T$ is equivalent to the criterion
\begin{equation}\label{eq:ic}
\tr{TU_x V_y} \neq 0 \quad \forall x,y\in\Z_d \, .
\end{equation}
This result has been discussed e.g. in \cite{AlPr77JMP,CaDeLaLe00JMP,DaPeSa04}.
For completeness, we provide a proof in Appendix.

\begin{proposition}\label{prop:icmax}
Suppose $\Ao_\lam$ and $\Bo_\gamma$ are two observables with $\lambda \notin \{0,1\}$ and $\gamma =\gamma_{\max}(\lambda)$.
Then $\Ao_\lam$ and $\Bo_\gamma$ have a unique joint observable $\Co$.
The observable $\Co$ is informationally complete if and only if $d$ is odd.
\end{proposition}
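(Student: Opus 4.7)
The plan is to invoke the informational completeness criterion \eqref{eq:ic} applied to the explicit state $T=\kb{\chi_\lam}{\chi_\lam}$ given by Proposition \ref{Prop:gammamax}, and reduce the question to a simple arithmetic condition on $\Z_d$. Since $\lambda\in(0,1)$, both coefficients $\alpha_\lam,\beta_\lam$ in \eqref{eq:ab} are strictly positive, so $\chi_\lam=\alpha_\lam\fii_0+\beta_\lam\psi_0$ is a genuine superposition of the two ``vacuum'' vectors. The plan is to expand
\begin{equation*}
\tr{T U_x V_y} = \ip{\chi_\lam}{U_x V_y \chi_\lam} = \alpha_\lam^2\ip{\fii_0}{U_xV_y\fii_0} + \alpha_\lam\beta_\lam\ip{\fii_0}{U_xV_y\psi_0} + \alpha_\lam\beta_\lam\ip{\psi_0}{U_xV_y\fii_0} + \beta_\lam^2\ip{\psi_0}{U_xV_y\psi_0}
\end{equation*}
and evaluate each matrix element using the basic actions $U_x\fii_k=\fii_{k+x}$, $V_y\fii_k=\omega^{yk}\fii_k$, $U_x\psi_k=\omega^{-xk}\psi_k$, $V_y\psi_k=\psi_{k+y}$, together with $\ip{\fii_j}{\psi_k}=\omega^{jk}/\sqrt{d}$.

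A direct computation then gives $\ip{\fii_0}{U_xV_y\fii_0}=\delta_{x,0}$, $\ip{\psi_0}{U_xV_y\psi_0}=\delta_{y,0}$, $\ip{\psi_0}{U_xV_y\fii_0}=1/\sqrt{d}$, and $\ip{\fii_0}{U_xV_y\psi_0}=\omega^{-xy}/\sqrt{d}$, so that
\begin{equation*}
\tr{T U_x V_y} = \alpha_\lam^2\,\delta_{x,0} + \beta_\lam^2\,\delta_{y,0} + \frac{\alpha_\lam\beta_\lam}{\sqrt{d}}\bigl(1+\omega^{-xy}\bigr).
\end{equation*}
The three cases with $x=0$ or $y=0$ are then immediate: positivity of $\alpha_\lam,\beta_\lam$ makes each of these values strictly positive (in the $(0,0)$ case it reduces to $\no{\chi_\lam}^2=1$). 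The only remaining case is $x,y\neq 0$, where the expression reduces to $(\alpha_\lam\beta_\lam/\sqrt{d})(1+\omega^{-xy})$, which vanishes precisely when $xy\equiv d/2\pmod d$.

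I then split on the parity of $d$. If $d$ is odd, no pair $x,y\in\Z_d$ can satisfy $xy\equiv d/2\pmod d$ (since $d/2$ is not an integer mod $d$), so $\tr{TU_xV_y}\neq 0$ for all $(x,y)$ and the criterion \eqref{eq:ic} yields informational completeness of $\Co_T$. If $d$ is even, the choice $x=1$, $y=d/2$ gives $\omega^{-xy}=-1$, hence $\tr{TU_xV_y}=0$, and $\Co_T$ fails to be informationally complete by the same criterion. The only potential obstacle in this argument is being careful about how I invoke the informational completeness criterion \eqref{eq:ic}, which is proved in the Appendix and which I would cite directly; all the rest is routine matrix-element bookkeeping involving $U_x$, $V_y$, and the Fourier pair $\fii_0,\psi_0$.
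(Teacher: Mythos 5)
Your proposal is correct and follows essentially the same route as the paper's proof: both reduce the problem via Proposition \ref{Prop:gammamax} to checking the criterion \eqref{eq:ic} for $T=\kb{\chi_\lam}{\chi_\lam}$, arrive at the same expression $\alpha_\lam^2\delta_{x,0}+\beta_\lam^2\delta_{y,0}+\frac{\alpha_\lam\beta_\lam}{\sqrt{d}}(1+\omega^{-xy})$, and conclude by noting that $\omega^{-xy}=-1$ has a solution precisely when $d$ is even. Your explicit case split on $x=0$ or $y=0$ and the condition $xy\equiv d/2\pmod d$ are just slightly more verbose but equivalent formulations of the paper's condition $2xy\equiv d\bmod 2d$.
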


\begin{proof}
From Proposition \ref{Prop:gammamax} we know $\Ao_\lambda$ and $\Bo_\gamma$ have a unique joint observable $\Co_T$, generated by the state $T = \kb{\chi_\lam}{\chi_\lam}$, with $\chi_\lam = \alpha_\lam \fii_0 + \beta_\lam \psi_0$. 
The informational completeness of $\Co_T$ is equivalent to the condition \eqref{eq:ic}, and 
a straightforward calculation gives
\begin{equation*}
\tr{TU_x V_y} = \ip{\chi_\lam}{U_xV_y\chi_\lam}= \alpha_\lam^2 \delta_{x,0} + \beta_\lam^2 \delta_{y,0} + \frac{\alpha_\lam \beta_\lam}{\sqrt{d}} \left( \omega^{-xy} + 1 \right) \, .
\end{equation*}
Let us first notice that $\alpha_\lam > 0$ and $\beta_\lambda > 0$ since both $\lambda$ and $\gamma$ are nonzero (see \eqref{eq:ab}).
Hence, $\tr{TU_x V_y}=0$ exactly when $\omega^{-xy}=-1$.
The latter condition is equivalent to $2xy\equiv d \mod 2d$. 
We conclude that the informational completeness of $\Co_T$ is equivalent to the fact that the equation $2x = d \mod 2d$ has no solution $x\in\Z_d$, and this holds if and only if $d$ is odd.
\end{proof}

In Proposition \ref{prop:icmax} the crucial assumption is that $\gamma =\gamma_{\max}(\lambda)$.
This guarantees that $\Ao_\lambda$ and $\Bo_\gamma$ have a unique joint observable.
If we have $0<\gamma < \gamma_{\max}(\lambda)$, then $\Ao_\lambda$ and $\Bo_\gamma$ have infinitely many joint observables.
In this case it is always possible to choose an informationally complete joint observable, as we prove in the following.

\begin{proposition}\label{prop:icnonmax1}
Suppose $\Ao_\lam$ and $\Bo_\gamma$ are two observables with $\lambda \notin \{0,1\}$ and $0<\gamma <\gamma_{\max}(\lambda)$.
Then they have an informationally complete covariant joint observable.
\end{proposition}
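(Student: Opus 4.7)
The strategy is to use the informational completeness criterion for covariant phase space observables, namely $\Co_T$ is informationally complete iff $\tr{TU_xV_y}\neq 0$ for all $(x,y)\in\Z_d\times\Z_d$ (the condition \eqref{eq:ic} proved in the Appendix). Since $0<\gamma<\gamma_{\max}(\lambda)$, by Propositions \ref{prop:ord. makes sense} and \ref{prop:if-jm-then-cov} the convex set $\mathcal{T}:=\{T\in\sh : \Co_T\text{ has marginals }\Ao_\lambda,\Bo_\gamma\}$ is nonempty. The plan is to show that some $T\in\mathcal{T}$ gives an informationally complete $\Co_T$, via a genericity argument on the finite-dimensional convex set $\mathcal{T}$.

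First, I produce a strictly positive element $T_*\in\mathcal{T}$: the state $T_{\mathrm{diag}}:=\sum_{j\in\Z_d}\Lambda(-j)\kb{\varphi_j}{\varphi_j}$ is full-rank (since $\lambda<1$ makes $\Lambda(-j)>0$ for every $j$) and a direct calculation using $\mo{\ip{\varphi_j}{\psi_k}}^2=1/d$ shows its marginals are $\Ao_\lambda$ and $\Bo_0$; combining with the extremal state $T_\lambda=\kb{\chi_\lambda}{\chi_\lambda}$ of Proposition \ref{Prop:gammamax}, the convex combination $T_*:=tT_\lambda+(1-t)T_{\mathrm{diag}}$ with $t:=\gamma/\gamma_{\max}(\lambda)\in(0,1)$ lies in $\mathcal{T}$ and is strictly positive. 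Next, define $B_{x,y}:=\{T\in\mathcal{T}:\tr{TU_xV_y}=0\}$. When $x=0$ or $y=0$, the quantity $\tr{TU_xV_y}$ depends only on the $\varphi$- or $\psi$-basis diagonal of $T$, and the marginal constraints immediately yield $\tr{TV_y}=\lambda+(1-\lambda)\delta(y)$ and $\tr{TU_x}=\gamma+(1-\gamma)\delta(x)$, both nonzero because $\lambda,\gamma>0$; hence $B_{x,y}=\emptyset$ in these boundary cases.

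The main obstacle is showing $B_{x,y}\subsetneq\mathcal{T}$ when $x,y\neq 0$. The key algebraic observation is that $U_xV_y\notin\mathrm{span}_\C\{\Ao(j),\Bo(k):j,k\in\Z_d\}$: inspecting matrix entries in the $\varphi$-basis, every element of the span has constant $x$-subdiagonal for $x\neq 0$ (each $\Bo(k)$ contributes the constant $d^{-1}\omega^{kx}$ on that subdiagonal, while the $\Ao(j)$'s have vanishing off-diagonal part), whereas the $x$-subdiagonal of $U_xV_y$ is the nonconstant sequence $\{\omega^{ym}\}_{m\in\Z_d}$ when $y\neq 0$. Since this span is closed under adjoints, so is its Hilbert--Schmidt orthogonal complement, and therefore one can choose a Hermitian operator $S$ satisfying $\tr{S\Ao(-j)}=\tr{S\Bo(-k)}=0$ for all $j,k$ together with $\tr{SU_xV_y}\neq 0$; these constraints automatically force $\tr{S}=0$. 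Because $T_*$ is strictly positive, $T_*+\epsilon S$ is a state in $\mathcal{T}$ for every sufficiently small $\epsilon>0$, and for all but one such $\epsilon$ the affine function $\epsilon\mapsto\tr{(T_*+\epsilon S)U_xV_y}$ is nonzero. This gives $B_{x,y}\subsetneq\mathcal{T}$ for every $(x,y)$ with $x,y\neq 0$, so the finite union $\bigcup_{x,y}B_{x,y}$ of closed proper convex subsets cannot cover the finite-dimensional convex set $\mathcal{T}$; any $T$ in the complement yields the desired informationally complete covariant joint observable $\Co_T$.
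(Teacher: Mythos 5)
Your argument is correct, but it takes a genuinely different route from the paper. The paper's proof is constructive and splits into cases: it rescales $(\lambda,\gamma)$ radially to a boundary point $(\lambda_0,\gamma_0)$, takes the extremal rank-one state $\kb{\chi_{\lambda_0}}{\chi_{\lambda_0}}$, and mixes it with an explicit marginal-preserving perturbation --- the maximally mixed state when $d$ is odd, and a specially built traceless operator $X$ (with an explicit bound on the mixing parameter $\kappa$) when $d$ is even, the latter being needed because $\tr{U_xV_y}=0$ for $(x,y)\neq(0,0)$, so mixing with $\tfrac1d\id$ cannot repair the zeros at $2xy\equiv d \pmod{2d}$. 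You instead keep $\lambda$ fixed, build a strictly positive base point $T_*$ in the fiber $\mathcal{T}$ of admissible states by mixing $\kb{\chi_\lambda}{\chi_\lambda}$ with the diagonal state $T_\Lambda$, and run a genericity argument: for each $(x,y)$ you exhibit a Hermitian, marginal-annihilating direction $S$ with $\tr{SU_xV_y}\neq 0$ (via the correct observation that $U_xV_y\notin{\rm span}\{\Ao(j),\Bo(k)\}$ for $x,y\neq 0$), so each bad set $B_{x,y}$ is a proper affine slice of $\mathcal{T}$. This unifies the even and odd cases and is conceptually cleaner, at the cost of being non-constructive; the paper's version hands you an explicit $T$.

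One step is misstated, though the argument is easily repaired. The principle ``a finite union of closed proper convex subsets cannot cover a finite-dimensional convex set'' is false in general: $[0,1]=[0,\tfrac12]\cup[\tfrac12,1]$. What saves you is that your $B_{x,y}$ are not arbitrary convex subsets but zero sets of affine functionals $T\mapsto\tr{TU_xV_y}$ that you have shown are not identically zero on $\mathcal{T}$. Hence each $\mathcal{T}\setminus B_{x,y}$ is relatively open and dense in $\mathcal{T}$ (for $T\in B_{x,y}$ and $T'\notin B_{x,y}$, every point $(1-s)T+sT'$ with $s>0$ small lies in $\mathcal{T}\setminus B_{x,y}$), so the finite intersection of these sets is nonempty. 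With that substitution the proof is complete.
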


\begin{proof}
Let $(\gamma_0 , \lam_0)$ be the intersection of the half line $\R_+ (\gamma , \lam)$ with the boundary of the domain \eqref{eq:region} in $\R_+^2$, and let $t_0>1$ such that $t_0 (\gamma , \lam) = (\gamma_0 , \lam_0)$. Let $\tau = 1 - 1/t_0 \in (0,1)$. We treat separately the cases of odd and even $d$.

1) Suppose that $d=2n$ is even.
For all $k\in\Z_d$, we denote
\begin{equation*}
X_k := \frac{i}{2} \left( \kb{\fii_{-k}}{\fii_0} - \kb{\fii_0}{\fii_{-k}} + \kb{\fii_k}{\fii_0} - \kb{\fii_0}{\fii_k} \right) \, .
\end{equation*}
The linear maps $X_k$ are selfadjoint trace zero operators for every $k$, and it is easy to check that
$$
\sum_{x\in\Z_d} U_x V_y X_k V_y^* U_x^* = 0 \, , \qquad \sum_{y\in\Z_d} U_x V_y X_k V_y^* U_x^* = 0 \, .
$$
We introduce the selfadjoint operators
$$
X = \sum_{k\in\Z_d} X_k \, ,
$$
and, for $\kappa > 0$,
$$
S_\kappa = \frac{1}{d} \id + \kappa X \, .
$$
If $\kappa < 1/(d\no{X})$, then $S_\kappa\in\sh$. Moreover, the associated covariant phase space observable $\Co_{S_{\kappa}}$ has trivial marginals $\Ao_0$ and $\Bo_0$. 
A straightforward calculation gives
$$
\tr{S_\kappa U_x V_y} = \delta_{x,0}\delta_{y,0} + i \kappa (\omega^{-xy} - 1) \, .
$$
The covariant phase space observable $\Co_{T_\kappa}$ associated to the state $T_\kappa = (1-\tau) \kb{\chi_{\lam_0}}{\chi_{\lam_0}} + \tau S_\kappa$, with $\chi_{\lam_0} = \alpha_{\lam_0} \fii_0 + \beta_{\lam_0} \psi_0$, has margins $\Ao_\lam$ and $\Bo_\gamma$. Moreover,
\begin{eqnarray}\label{eq:infocom}
\tr{T_\kappa U_x V_y} & = & (1-\tau) \lft \alpha_{\lam_0}^2 \delta_{x,0} + \beta_{\lam_0}^2 \delta_{y,0} \rgt + \tau \delta_{x,0} \delta_{y,0} \notag\\
&& + (1-\tau) \frac{\alpha_{\lam_0} \beta_{\lam_0}}{\sqrt{d}} (\omega^{-xy} + 1) + i\kappa \tau (\omega^{-xy} - 1) \, .
\end{eqnarray}
Let
$$
\varepsilon = \min_{\{k\in\Z_d \, , \, k\neq n\}} |\omega^k + 1| \, , \qquad \delta = \max_{\{k\in\Z_d\}} |\omega^k + 1| \, .
$$
For $\kappa < \min\left\{ \alpha_{\lam_0} \beta_{\lam_0} (1-\tau) \varepsilon / (\tau \delta\sqrt{d}) \, , \, 1/(d\no{X}) \right\}$, the right hand side of \eqref{eq:infocom} is nonzero for all $x,y\in\Z_d$, which proves informational completeness of $\Co_{T_\kappa}$ by the criterion \eqref{eq:ic}.

2) Suppose that $d$ is odd. Then, for $T = (1-\tau) \kb{\chi_{\lam_0}}{\chi_{\lam_0}} + (\tau/d) \id$, the associated covariant phase space observable $\Co_T$ has margins $\Ao_\lam$ and $\Bo_\gamma$, and
$$
\tr{TU_x V_y} = (1-\tau) \left[ \alpha_{\lam_0}^2 \delta_{x,0} + \beta_{\lam_0}^2 \delta_{y,0} + \frac{\alpha_{\lam_0} \beta_{\lam_0}}{\sqrt{d}} \left( \omega^{-xy} + 1 \right) \right] + \tau \delta_{x,0} \delta_{y,0} \, ,
$$
which is nonzero for all $x,y\in \Z_d$. 
The informational completeness of $\Co_T$ then follows from the criterion \eqref{eq:ic}.
\end{proof}

The two trivial cases $\lam=0$ or $\gamma=0$ are not very interesting, but for completeness we make  the following observation. 

\begin{proposition}\label{prop:icnonmax2}
Suppose $\Ao_\lam$ and $\Bo_\gamma$ are two observables with $\lam = 0$ or $\gamma=0$. Then they have no informationally complete joint observable.
\end{proposition}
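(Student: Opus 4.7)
The plan is to argue by a dimension count: for any joint observable $\Co$ of $\Ao_\lam$ and $\Bo_\gamma$ with $\lam = 0$, the span $V := \textrm{span}\{\Co(j,k) : j,k\in\Z_d\}$ must be a proper subspace of $\lh$. First I would reduce the case $\gamma = 0$ to $\lam = 0$ via the Fourier conjugation $\widehat\Co(j,k) = \ff^\ast \Co(k,-j) \ff$ used in the proof of $\lam_{\max}(x) = \gamma_{\max}(x)$: this map sends joint observables of $\Ao_\lam, \Bo_\gamma$ bijectively to joint observables of $\Ao_\gamma, \Bo_\lam$, and preserves informational completeness since conjugation by the unitary $\ff$ preserves the linear span.

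Fix then $\lam = 0$ and let $\Co$ be an arbitrary joint observable, so that $\sum_k \Co(j,k) = (1/d)\id$ and $\sum_j \Co(j,k) = \Bo_\gamma(k)$. I would introduce the subspace
\[
W := \textrm{span}\{\Bo_\gamma(0), \ldots, \Bo_\gamma(d-1)\} \, ,
\]
which satisfies $W \subseteq V$ and $\dim W \leq d$, and which contains $\id = \sum_k \Bo_\gamma(k)$. The key observation is that both right-hand sides of the marginal equations lie in $W$. Hence, writing $\bar\Co(j,k)$ for the image of $\Co(j,k)$ in the quotient $V/W$, the marginal conditions reduce to
\[
\sum_{k\in\Z_d} \bar\Co(j,k) = 0 \quad \forall j \, , \qquad \sum_{j\in\Z_d} \bar\Co(j,k) = 0 \quad \forall k \, .
\]

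The main step is then a counting argument. These $2d$ linear relations among the $d^2$ generators $\bar\Co(j,k)$ of $V/W$ are independent apart from the single dependence that summing the row relations over $j$ and summing the column relations over $k$ both produce $\sum_{j,k}\bar\Co(j,k)=0$, so their rank in $\R^{d^2}$ is exactly $2d-1$. Therefore $\dim(V/W) \leq d^2 - (2d-1) = (d-1)^2$, giving
\[
\dim V \leq \dim W + \dim(V/W) \leq d + (d-1)^2 = d^2 - d + 1 < d^2 \, ,
\]
which forces $V \neq \lh$ and prevents $\Co$ from being informationally complete. I do not anticipate any serious obstacle: positivity of $\Co$ enters only implicitly through the marginal conditions, the argument does not need the covariant structure, and the only mildly subtle point is the rank computation for the row and column relations, which is elementary linear algebra.
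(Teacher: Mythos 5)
Your proof is correct, and it relies on the same basic mechanism as the paper's: the marginal constraint $\sum_{k}\Co(j,k)=\tfrac{1}{d}\id$ forces enough linear dependencies to cap $\dim\mathrm{span}\{\Co(j,k)\}$ at $d^2-d+1<d^2$. The paper's version is leaner: it uses only the trivial marginal, observing that each $\Co(j,0)=\tfrac{1}{d}\id-\sum_{k\neq 0}\Co(j,k)$, so the span is generated by $\id$ together with the $d(d-1)$ operators $\Co(j,k)$, $k\neq 0$, giving the bound $1+d(d-1)$ directly; the case $\gamma=0$ is dismissed as ``similar'' rather than via your explicit Fourier conjugation (which is fine, and is indeed the same map used earlier to prove $\lam_{\max}=\gamma_{\max}$). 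Your extra step of quotienting by $W=\mathrm{span}\{\Bo_\gamma(k)\}$ and computing the rank $2d-1$ of the row and column relations is sound --- the only dependence among those $2d$ relations is the one you name, and rank--nullity applied to the surjection onto $V/W$ gives $\dim(V/W)\leq (d-1)^2$ --- but it buys nothing, since $d+(d-1)^2=1+d(d-1)=d^2-d+1$ is exactly the paper's bound; the information carried by the second marginal is swallowed by the generous estimate $\dim W\leq d$.
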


\begin{proof}
We consider only the case $\lam=0$, the case $\gamma=0$ being similar. 
Suppose that $\Co$ is a joint observable of $\Ao_0 = \mu \id$ and $\Bo_\lam$. 
We have
$$
\sum_{k\in\Z_d} \Co(j,k) = \Ao_0 (j) = \frac{1}{d} \id \qquad \forall j\in\Z_d \, ,
$$
hence
$$
{\rm span}\, \left\{ \Co(j,k) \mid j,k\in\Z_d \right\}  = {\rm span}\, \left\{ \id \, , \, \Co(j,k) \mid j\in\Z_d \, , \, k\in\Z_d \setminus \{0\} \right\}
$$
and then, for $d\geq 2$,
$$
\dim {\rm span}\, \left\{ \Co(j,k) \mid j,k\in\Z_d \right\} \leq 1 + d(d-1) < d^2 \, .
$$
Thus, $\Co$ is not informationally complete.
\end{proof}

\section{Sequential implementation of joint observables}\label{sec:seq}

In this section we discuss the sequential implementation of joint observables of $\Ao_\Lambda$ and $\Bo_\Gamma$ in the light of the recent results obtained in \cite{CaHeTo11} and \cite{HeWo10}.  
For illustrative purposes, we point out that two naive methods do not work. 

\subsection{Nondisturbing measurement}

Suppose that $\Ao_\Lambda$ and $\Bo_\Gamma$ are jointly measurable, i.e., they satisfy the condition stated in Proposition \ref{prop:joint->phi}.
The most uncomplicated way to realize their joint measurement would be to perform an $\Ao_\Lambda$-measurement without disturbing the subsequent $\Bo_\Gamma$-measurement.
In terms of instruments, this would mean that we choose an $\Ao_\Lambda$-compatible instrument $\Ii$ such that
\begin{equation*}
\sum_j \Ii_j^\ast(\Bo_\Gamma(k)) = \Bo_\Gamma(k)
\end{equation*}
for all $k$.
However, this type of measurement is typically not possible since a quantum measurement necessarily disturbs the input state.

Let us first notice that $\Ao_\lambda$ and $\Bo_\gamma$ commute if and only if $\lambda\gamma=0$, meaning that one of them is a trivial observable.
Generally, a non-disturbing measurement can be possible even if two observables do not commute.
But applying Proposition 3 from \cite{HeWo10} we see that this possibility is excluded whenever $\Bo_\gamma$ is informationally equivalent with $\Bo$ in the sense that the linear spans of the sets $\{\Bo_\gamma(k):k\in\Z_d\}$ and $\{\Bo(k):k\in\Z_d\}$ are equal.
This property is satisfied by any $\Bo_\gamma$ with $\gamma\neq 0$.
Therefore, whenever both observables are nontrivial, then $\Ao_\lambda$-measurement disturbs the subsequent $\Bo_\gamma$-measurement and the resulting observable is not a joint measurement of $\Ao_\lambda$ and $\Bo_\gamma$.

\subsection{Measuring only part of the ensemble}

Suppose we have a measurement setup for $\Ao$ and that the corresponding instrument is $\Ii$.
We can implement an unsharp observable $\Ao_\lambda$ by performing the $\Ao$-measurement in a randomly chosen $\lambda$-part of the ensemble and doing nothing for the rest $(1-\lambda)$-part.
The corresponding $\Ao_\lambda$-compatible instrument $\Ii'$ is then
\begin{equation}\label{eq:doing-nothing}
\Ii'_j(\varrho) = \lambda \Ii_j(\varrho) + \frac{1-\lambda}{d} \, \varrho \, .
\end{equation}
This is clearly a very direct way to decrease the disturbance that an $\Ao$-measurement would cause. 
By measuring the observable $\Bo$ after the first measurement, one could expect to have a useful joint measurement of $\Ao_\lambda$ and some approximate version of $\Bo$.
However, this type of method does not yield an informationally complete joint measurement.

The observable $\Ao$ consists of rank-1 operators, and any $\Ao$-compatible instrument $\Ii$ is of the form
\begin{equation}
\Ii_j(\varrho)= \tr{\varrho \Ao(j)} \xi_j
\end{equation}
for some set of states $\{\xi_j: j\in\Z_d\}$ \cite{HeWo10}. 
If we insert this form into \eqref{eq:doing-nothing}, we see that a sequential measurement consisting of $\Ii'$ followed by a $\Bo$-measurement leads to the joint observable
\begin{equation*}
\Co(j,k) = \lambda\  \tr{\xi_j \Bo(k)} \Ao(j) + \frac{1-\lambda}{d} \, \Bo(k) \, , \qquad j,k\in\Z_d \, .
\end{equation*}

The linear span of the set $\{\Co(j,k):j,k\in\Z_d \}$ is contained in the linear span of the union $\{\Ao(j): j\in\Z_d\}\cup\{\Bo(k): k\in\Z_d\}$.
The latter is strictly smaller than $\lh$, hence $\Co$ is not informationally complete.
We also see that this kind of approach cannot give more information than separate measurements of $\Ao$ and $\Bo$ would give.

\subsection{General joint observables}

We recall from \cite{HeWo10} that every joint observable of $\Ao_\Lambda$ and $\Bo_\Gamma$ can be implemented as a sequential measurement of $\Ao_\Lambda$ followed by a measurement of $\Bo$.
Namely, suppose that $\Co$ is a joint observable of $\Ao_\Lambda$ and $\Bo_\Gamma$.
We define an instrument $\Ii$ by
\begin{equation}\label{eq:instrument-general}
\Ii_j(\varrho) = \sum_{k\in\Z_d} \tr{\varrho \Co(j,k)} \Bo(k) \, .
\end{equation}
This is an $\Ao_\Lambda$-compatible instrument, and from $\Bo(k)\Bo(k')=\delta_{k,k'}\Bo(k)$ it follows that
\begin{equation*}
\tr{\Bo(k)\Ii_j(\varrho)} = \tr{\varrho \Co(j,k)}  \, .
\end{equation*}
Hence, $\Co(j,k) = \Ii_j^\ast (\Bo(k))$, and we conclude that $\Co$ is implemented as a sequential measurement of $\Ao_\Lambda$ followed by a measurement of $\Bo$, as claimed.

\subsection{Covariant phase space observables}

The instrument defined in \eqref{eq:instrument-general} may look quite artificial and before we know the structure of $\Co$, the formula does not give us any hint on the structure of $\Ii$.
In contrast, every covariant phase space observable can implemented as sequential measurement of $\Ao_\Lambda$ and $\Bo$ in a very specific form.

As explained in \cite{CaHeTo11}, every covariant $\Ao_\Lambda$-compatible instrument gives rise to a covariant phase space observable.
Covariance of an instrument $\Ii$ here means that
\begin{equation}\label{eq:ins-cov}
U_xV_y\Ii_j(V_y^\ast U_x^\ast \varrho U_x V_y)V_y^\ast U_x^\ast = \Ii_{j+x}(\varrho)  
\end{equation}
for all $x,y,j\in\Z_d$ and $\varrho\in\sh$.
It is straightforward to verify that the joint observable $\Co(j,k) := \Ii_j^\ast (\Bo(k))$ is a covariant phase space observable.

We demonstrate this method by choosing the $\Ao_\lambda$-compatible L\"uders instrument $\Ii^L$, defined as
\begin{equation*}
\Ii^L_j(\varrho) = \sqrt{\Ao_\lambda(j)} \varrho \sqrt{\Ao_\lambda(j)}  \, .
\end{equation*}
It is straightforward to see that $\Ii^L$ satisfies \eqref{eq:ins-cov}.
The covariant joint observable is then
\begin{equation*}
\Co(j,k) = \sqrt{\Ao_\lambda(j)} \Bo(k) \sqrt{\Ao_\lambda(j)} \, ,
\end{equation*}
and its associated state is
$$
T = d \Co(0,0) = d \sqrt{\Ao_\lambda(0)} \kb{\psi_0}{\psi_0} \sqrt{\Ao_\lambda(0)} \, .
$$
Since
\begin{eqnarray*}
\sqrt{\Ao_\lambda(j)} & = & \sqrt{\frac{1-\lam}{d}} \, \id + \left( \sqrt{\frac{(d-1) \lam + 1}{d}} - \sqrt{\frac{1-\lam}{d}} \right) \, \Ao(j) \\
& = & \frac{\beta_\lam}{\sqrt{d}} \, \id + \alpha_\lam \kb{\fii_j}{\fii_j}
\end{eqnarray*}
and
$$
\sqrt{\Ao_\lambda(0)} \psi_0 = \frac{1}{\sqrt{d}} \left(\beta_\lam \psi_0 + \alpha_\lam \fii_0\right) = \frac{1}{\sqrt{d}} \chi_\lam \, ,
$$
we see that $T=\kb{\chi_\lam}{\chi_\lam}$, hence by Proposition \ref{Prop:gammamax} the marginal $\Bo_\gamma$ is such that $\gamma=\gamma_{\max}(\lambda)$.

In conclusion, this type of sequential measurement of $\Ao_\lambda$ and $\Bo$ is effectively a joint measurement of $\Ao_\lambda$ and $\Bo_\gamma$ with minimal unsharpnesses.

\section{Discussion}\label{sec:discussion}

In our investigation we have concentrated on canonically conjugated pairs of observables, i.e., the orthonormal bases $\{\varphi_j\}_{j\in\Z_d}$ and $\{\psi_k\}_{k\in\Z_d}$ have been assumed to be Fourier connected with respect to the Fourier transform of the cyclic group $\Z_d$; see \eqref{eq:defF}. Equivalently, we have assumed that the two bases satisfy $\ip{\fii_j}{\psi_k} = (1/\sqrt{d}) \, \omega^{jk}$ for all $j,k\in\Z_d$.
As a consequence the observables $\Ao$ and $\Bo$, both defined on $\Omega_\Ao = \Omega_\Bo = \Z_d$ as $\Ao(j)=\kb{\varphi_j}{\varphi_j}$ and $\Bo(k)=\kb{\psi_k}{\psi_k}$, satisfy the covariance and invariance conditions \eqref{eq:cov-A} -- \eqref{eq:cov-B}, which turn out to be very useful in our calculations.

Our approach covers more cases than it may seem at the first sight.
Namely, we recall that two orthonormal bases $\{\varphi_j\}_j$ and $\{\varphi'_j\}_j$ define the same observable iff there are complex numbers $\alpha_j$ with $|\alpha_j|=1$ such that $\varphi'_j=\alpha_j \varphi_j$.
To illustrate an application of this many-to-one correspondence, suppose that the dimension $d$ is an odd prime number, say $d=p$ (the generalization to the case $d=p^r$, with $r$ positive integer, is straightforward). 
In this case it is easy to give a full set of $p+1$ MUBs \cite{WoFi89}; fix an orthonormal basis $\{\fii_j\}_{j\in\Z_p}$ and define $p$ orthonormal bases $\{\psi^a_k\}_{k\in\Z_p}$, each one labeled by $a\in\Z_p$, by
$$
\psi^a_k = \frac{1}{\sqrt{p}} \sum_{x\in\Z_p} \omega^{ax^2 + kx} \fii_x \, .
$$
The fact that these are MUBs follows from the Gauss summation formula
\begin{equation}\label{eq:Gauss}
\frac{1}{\sqrt{p}} \sum_{x\in\Z_p} \omega^{ax^2} = \left( \frac{a}{p} \right) \times \left\{ \begin{array}{ccc} 1 & \mbox{if} & p \in 4 \nat + 1 \\ i & \mbox{if} & p \in 4 \nat - 1 \end{array} \right. \, ,
\end{equation}
where $\left( \frac{a}{p} \right)$ is the Legendre symbol (see e.g. \cite{BeEv81}).

It is immediate to see that the orthonormal basis $\{\psi^a_k\}_{k\in\Z_p}$ is Fourier connected to the orthonormal basis $\{\fii'_j\}_{j\in\Z_p}$ given by
$$
\fii'_j = \omega^{aj^2} \fii_j \quad \forall j \, ,
$$
i.e., $\ip{\fii'_j}{\psi^a_k} = (1/\sqrt{p}) \, \omega^{jk}$.
Moreover, for $a,b\in\Z_p \setminus \{0\}$, with $a\neq b$, define the rescaled orthonormal bases $\{\psi^{a\prime}_j\}_{j\in\Z_p}$ and $\{\psi^{b\prime}_k\}_{k\in\Z_p}$, given by
\begin{align*}
\psi^{a\prime}_j & = \omega^{-4^{-1} j^2 (b-a)^{-1}} \psi^a_j \\
\psi^{b\prime}_k & = \omega^{k^2(b-a)} \left( \frac{b-a}{p} \right) \psi^b_{2k(b-a)} \times \left\{ \begin{array}{ccc} 1 & \mbox{if} & p \in 4 \nat + 1 \\ - i & \mbox{if} & p \in 4 \nat - 1 \end{array} \right.
\end{align*}
(Here $\omega^{x^{-1}}$ means `$\omega$ to the inverse of $x$ in the field $\Z_p$' and should not be confused with $e^{\frac{2\pi i}{px}}$). Then, an easy computation using the Gauss formula \eqref{eq:Gauss} yelds
$$
\ip{\psi^{b\prime}_h}{\psi^{a\prime}_k} = \frac{1}{\sqrt{p}} \omega^{-hk} \, ,
$$
which shows that also $\{\psi^{a\prime}_j\}_{j\in\Z_p}$ and $\{\psi^{b\prime}_k\}_{k\in\Z_p}$ are Fourier connected.

More generally, one can start from a complementary pair of observables, which means that $\{\varphi_j\}_j$ and $\{\psi_k\}_k$ are mutually unbiased but not necessarily Fourier connected.
Obviously, we can still ask similar questions on joint measurements.
Especially, it would be interesting to know whether Proposition \ref{prop:symmetric} is still valid under this more general setting.
In other words, the question is whether all complementary pairs are essentially similar with respect to joint measurability

Even if we leave this question open in the general case, we can see that our approach generalizes to a larger domain than we have explicitly used it for.
Indeed, all our results are still valid (and with only very slight modifications in some of the proofs) if we consider Fourier transform with respect to a generic abelian group $G$ with order $d$, i.e., $G=\Z_{d_1}\times\ldots\times\Z_{d_k}$ for $d_1+\cdots+d_n=d$ and $d_i = p_i^{r_i}$, with $p_i$ prime and $r_i$ integer for all $i=1,\ldots ,n$. 
In this case, $\hh = \hh_1 \otimes \ldots \otimes \hh_n$ with $\dim \hh_i = d_i$, a basis $\{\fii^i_j\}_{j\in\Z_{d_i}}$ is chosen in each factor Hilbert space $\hh_i$, and the $G$-Fourier transform of $\hh$ is just the tensor product $\ff = \ff_1 \otimes \ldots \otimes \ff_n$, where each $\ff_i$ is the $\Z_{d_i}$-Fourier transform in $\hh_i$ with respect to the basis $\{\fii^i_j\}_{j\in\Z_{d_i}}$, as defined in \eqref{eq:defF}.
 The mutually unbiased bases $\{\fii_j\}_{j\in\Z_d}$ and $\{\psi_k\}_{k\in\Z_d}$ are replaced by the bases $\{\fii_{j_1 \, , \ldots,\, j_n}\}_{j_1\in\Z_{d_1} \, , \ldots , \, j_n\in\Z_{d_n}}$ and $\{\psi_{k_1 \, , \ldots,\, k_n}\}_{k_1\in\Z_{d_1} \, , \ldots ,\, k_n\in\Z_{d_n}}$ of $\hh$, given by
\begin{align*}
\fii_{j_1 \, , \ldots ,\,j_n} & = \fii^1_{j_1} \otimes \ldots \otimes \fii^n_{j_n} \\
\psi_{k_1 \, , \ldots ,\,k_n} & = \ff^\ast (\fii_{j_1 \, , \ldots j_n}) = \psi^1_{k_1} \otimes \ldots \otimes \psi^n_{k_n} \, .
\end{align*}
Their associated complementary observables $\Ao$ and $\Bo$ are now both defined on $G$, and given by $\Ao (j_1 \, , \ldots , \, j_n) = \Ao(j_1) \otimes \ldots \otimes \Ao(j_n)$ and $\Bo (k_1 \, , \ldots , k_n) = \Bo(k_1) \otimes \ldots \otimes \Bo(k_n)$. They still satisfy the analogues of the covariance and invariance conditions \eqref{eq:cov-A} -- \eqref{eq:cov-B}, if the representations $U$ and $V$ are replaced by suitable tensor products.

To demonstrate that we can now handle larger class of complementary observables, let $\hi=\C^4$, choose an orthonormal basis $\{ \varphi_j \}_{j\in\{0,\ldots, 3\}}$ of $\C^4$ and set
\begin{align*}
&\psi_0 = \half ( \varphi_0 + \varphi_1  + \varphi_2 + \varphi_3) \, , & \psi_1 = \half ( \varphi_0 - \varphi_1  + \varphi_2 - \varphi_3) \, ,\\
&\psi_2= \half ( \varphi_0 + \varphi_1  - \varphi_2 - \varphi_3)\, , 
&\psi_3 = \half ( \varphi_0 - \varphi_1  - \varphi_2 + \varphi_3) \, .
\end{align*}
Then the observables $\Ao(j)=\kb{\varphi_j}{\varphi_j}$ and $\Bo(k)=\kb{\psi_k}{\psi_k}$ are complementary.
They are equally defined by any two orthonormal bases $\{ \alpha_j\varphi_j \}_{j\in\{0,\ldots, 3\}}$ and $\{\beta_k\psi_k\}_{k\in\{0,\ldots, 3\}}$, where $\alpha_j,\beta_k$ are complex numbers with unit modulus.
If some pair of these orthonormal bases were connected by the $\Z_4$-Fourier transform, then the matrix of their scalar products $[\overline{\alpha_j} \beta_k \ip{\fii_j}{\psi_k}]$ should be equal to the $\Z_4$-Fourier matrix
\begin{align*}
\frac{1}{2} \begin{pmatrix} 1&1&1&1\\ 1&i&-1&-i \\  1&-1&1&-1\\ 1&-i&-1&i \\
\end{pmatrix}
\end{align*}
or to a matrix obtained from the above by some permutations of its rows and columns. It is straightforward to verify that the deriving set of equations for $\alpha_j$ and $\beta_k$ has no solution.
However, the matrix of scalar products $[\ip{\fii_j}{\psi_k}]$ is just the Fourier matrix of $G=\Z_2 \times \Z_2$, i.e.,
\begin{equation}\label{eq:Fmatrix}
\frac{1}{2} \left( \begin{array}{cccc} 1 & 1 & 1 & 1 \\ 1 & -1 & 1 & -1 \\ 1 & 1 & -1 & -1 \\ 1 & -1 & -1 & 1 \end{array} \right) \, .
\end{equation}
In other words, the two orthonormal bases are connected by the Fourier transform of $\Z_2 \times \Z_2$.

\section*{Appendix: criterion for informational completeness}

\begin{theorem}\label{th:infocom}
Let $\Co_T$ be a covariant phase space observable.
Then $\Co_T$ is informationally complete if and only if
\begin{equation}\label{eq:icbis}
\tr{TU_x V_y} \neq 0 \quad \forall x,y\in\Z_d \, .
\end{equation}
\end{theorem}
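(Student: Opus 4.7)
The plan is to diagonalize the covariant structure in the Weyl--Heisenberg basis and reduce informational completeness to a statement about a $d^2\times d^2$ matrix of Fourier type.

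First I would recall (and quickly verify) that the Weyl operators $\{U_x V_y : x,y\in\Z_d\}$ form an orthogonal basis of $\lh$ with respect to the Hilbert--Schmidt inner product. Indeed, a direct computation using $V_y U_x=\omega^{xy}U_xV_y$ and $\tr{U_a V_b}=d\,\delta_{a,0}\delta_{b,0}$ yields $\tr{(U_{x'}V_{y'})^*U_xV_y}=d\,\delta_{x,x'}\delta_{y,y'}$. Consequently every state $T\in\sh$ admits the expansion
\begin{equation*}
T=\frac{1}{d}\sum_{x,y\in\Z_d}\overline{\tr{TU_xV_y}}\;U_xV_y\, ,
\end{equation*}
so that the numbers $\tr{TU_xV_y}$ are (up to conjugation) the Weyl coefficients of $T$.

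The next step is to compute how the conjugation by $U_jV_k$ acts on a single Weyl operator. Using $V_kU_x=\omega^{xk}U_xV_k$ and $U_jV_yU_j^*=\omega^{-jy}V_y$, one finds
\begin{equation*}
U_jV_k(U_xV_y)V_k^*U_j^* \;=\; \omega^{xk-jy}\,U_xV_y \, .
\end{equation*}
Substituting into $\Co_T(j,k)=\tfrac{1}{d}U_jV_k T V_k^*U_j^*$ gives the expansion
\begin{equation*}
\Co_T(j,k) \;=\; \frac{1}{d^2}\sum_{x,y\in\Z_d}\overline{\tr{TU_xV_y}}\,\omega^{xk-jy}\,U_xV_y\, .
\end{equation*}

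Now I would translate informational completeness into linear algebra. By the criterion after the definition of informational completeness, $\Co_T$ is informationally complete iff $\{\Co_T(j,k):j,k\in\Z_d\}$ spans $\lh$. Since we have exactly $d^2=\dim\lh$ such operators, this is equivalent to their linear independence. In the Weyl basis, the coordinates of $\Co_T(j,k)$ are $\big(t_{xy}\,\omega^{xk-jy}\big)_{x,y}$ with $t_{xy}=\tfrac{1}{d^2}\overline{\tr{TU_xV_y}}$; the coefficient matrix thus factors as $\operatorname{diag}(t_{xy})\cdot F$, where $F$ is the matrix $[\omega^{xk-jy}]_{(x,y),(j,k)}$. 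This $F$ is, up to normalization, the 2D finite Fourier transform matrix, hence invertible. Therefore the full matrix is invertible iff every diagonal entry $t_{xy}$ is nonzero, i.e., iff $\tr{TU_xV_y}\neq0$ for all $x,y\in\Z_d$, which is exactly~\eqref{eq:icbis}.

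The main obstacle I anticipate is purely bookkeeping: keeping track of the exact phase exponents through the Weyl commutation relations to obtain the clean factorization into a Fourier matrix times a diagonal matrix. Once that factorization is established, the equivalence between nonvanishing of all $\tr{TU_xV_y}$ and invertibility of the coefficient matrix is immediate, and the criterion follows.
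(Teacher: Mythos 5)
Your proposal is correct and takes essentially the same route as the paper: both expand in the Weyl operator basis (the paper's reconstruction formula is exactly your orthogonality relation), use the fact that conjugation by $U_jV_k$ multiplies $U_xV_y$ by the phase $\omega^{xk-jy}$, and reduce informational completeness to the invertibility of a diagonal matrix of Weyl coefficients $\tr{TU_xV_y}$ composed with an invertible two-dimensional Fourier matrix. The only difference is presentational --- you read off the coefficient matrix of the operators $\Co_T(j,k)$ directly, whereas the paper factors the dual map $A\mapsto\tr{\Co_T(\cdot,\cdot)A}$ as a composition $R(\ff\otimes\ff^*)M_T\Phi$ --- so the two arguments are dual formulations of the same factorization.
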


Our proof of the above theorem relies on the following well known reconstruction formula for the Weyl-Heisenberg group, which is just a special case of orthogonality relations for irreducible representations of compact groups. 

\begin{proposition}\label{Prop:recons}
The following reconstruction formula holds for every $A\in\lh$:
\begin{equation}\label{eq:recons}
\frac{1}{d} \sum_{x,y\in\Z_d} \tr{AV^*_y U^*_x} U_x V_y = A \, .
\end{equation}
\end{proposition}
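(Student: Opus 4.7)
My plan is to verify that the $d^2$ Weyl-Heisenberg operators $\{U_x V_y\}_{x,y\in\Z_d}$ form an orthogonal basis of $\lh$ with respect to the Hilbert-Schmidt inner product $(A,B)\mapsto\tr{A^\ast B}$, and then read off \eqref{eq:recons} as the corresponding expansion of an arbitrary $A\in\lh$.

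The first step is to compute $\tr{U_x V_y}$ by evaluating the trace in the basis $\{\fii_k\}_{k\in\Z_d}$. Since $V_y\fii_k=\omega^{yk}\fii_k$ and $U_x\fii_k=\fii_{k+x}$, one has $\ip{\fii_k}{U_x V_y\fii_k}=\omega^{yk}\delta_{x,0}$ (Kronecker delta understood in $\Z_d$), and summing over $k$ yields $\tr{U_x V_y}=d\,\delta_{x,0}\delta_{y,0}$ via the standard identity $\sum_{k\in\Z_d}\omega^{yk}=d\delta_{y,0}$.

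The second step is to compute the pairwise Hilbert-Schmidt products. Rewriting \eqref{eq:VU} as $V_y^\ast U_a=\omega^{-ay}U_a V_y^\ast$ and using $V_y^\ast V_{y'}=V_{y'-y}$ together with $U_x^\ast U_{x'}=U_{x'-x}$, I reduce
\begin{equation*}
\tr{(U_x V_y)^\ast U_{x'}V_{y'}}=\tr{V_y^\ast U_{x'-x}V_{y'}}=\omega^{-(x'-x)y}\tr{U_{x'-x}V_{y'-y}}
\end{equation*}
to a single-Weyl trace. The first step then gives $\tr{(U_x V_y)^\ast U_{x'}V_{y'}}=d\,\delta_{x,x'}\delta_{y,y'}$, so the $d^2$ Weyl operators are pairwise orthogonal and each has squared Hilbert-Schmidt norm $d$.

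The third step is immediate: since $\dim\lh=d^2$, the family $\{U_x V_y\}_{x,y}$ is an orthogonal basis of $\lh$, so any $A\in\lh$ admits the expansion $A=\sum_{x,y\in\Z_d}c_{x,y}U_x V_y$ with coefficients $c_{x,y}=(1/d)\tr{(U_xV_y)^\ast A}=(1/d)\tr{V_y^\ast U_x^\ast A}$. Cyclicity of the trace rewrites this last expression as $(1/d)\tr{A V_y^\ast U_x^\ast}$, which is exactly \eqref{eq:recons}. No substantial obstacle is expected; the only delicate point is phase bookkeeping when commuting $V_y^\ast$ past $U_a$, and this is entirely handled by the single identity $V_y^\ast U_a=\omega^{-ay}U_a V_y^\ast$ extracted from \eqref{eq:VU}.
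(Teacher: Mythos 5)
Your proof is correct. It takes a mildly but genuinely different route from the paper's: the paper verifies \eqref{eq:recons} directly on the rank-one operators $\kb{\fii_m}{\fii_n}$ by computing the matrix elements $\ip{\fii_k}{\bigl[\sum_{x,y}\tr{\kb{\fii_m}{\fii_n}V_y^\ast U_x^\ast}U_xV_y\bigr]\fii_h}$ and showing they reproduce $d\,\ip{\fii_k}{\fii_m}\ip{\fii_n}{\fii_h}$, then extends by linearity. You instead establish the Hilbert--Schmidt orthogonality relations $\tr{(U_xV_y)^\ast U_{x'}V_{y'}}=d\,\delta_{x,x'}\delta_{y,y'}$ and read off \eqref{eq:recons} as the expansion of $A$ in the resulting orthogonal basis. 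Your version is slightly more structural --- it makes explicit the fact (which the paper only alludes to in the sentence preceding the proposition, where the formula is attributed to orthogonality relations for irreducible representations) that the $d^2$ Weyl operators form an orthogonal basis of $\lh$, a statement reusable elsewhere; the paper's computation is more self-contained but yields only the identity itself. All your intermediate steps check out: $\tr{U_xV_y}=d\,\delta_{x,0}\delta_{y,0}$ follows from $\ip{\fii_k}{U_xV_y\fii_k}=\omega^{yk}\delta_{x,0}$ and the geometric sum, the commutation $V_y^\ast U_a=\omega^{-ay}U_aV_y^\ast$ is the correct consequence of \eqref{eq:VU} with $V_y^\ast=V_{-y}$, and the normalization $\no{U_xV_y}^2_{HS}=d$ gives exactly the prefactor $1/d$ in \eqref{eq:recons}.
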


\begin{proof}
For all $h,k\in\Z_d$, we have $\ip{\fii_k}{U_x V_y \fii_h} = \omega^{hy} \, \delta_{h+x,k}$. 
Thus, for all $h,k,m,n\in\Z_d$, we obtain
\begin{eqnarray*}
&& \ip{\fii_k}{\left[\sum_{x,y\in\Z_d} \tr{\kb{\fii_m}{\fii_n} V_y^\ast U_x^\ast} U_x V_y \right] \fii_h} = \\
&& \qquad \qquad \qquad \qquad = \sum_{x,y\in\Z_d} \ip{U_x V_y \fii_n}{\fii_m} \ip{\fii_k}{U_x V_y \fii_h} \\
&& \qquad \qquad \qquad \qquad = \sum_{x,y\in\Z_d} \omega^{(h-n)y} \, \delta_{h+x,k} \, \delta_{n+x,m} \\
&& \qquad \qquad \qquad \qquad = \sum_{y\in\Z_d} \omega^{(h-n)y} \, \delta_{h-n,k-m} \\
&& \qquad \qquad \qquad \qquad = d \, \delta_{h,n} \, \delta_{h-n,k-m} \\
&& \qquad \qquad \qquad \qquad = d \, \delta_{h,n} \, \delta_{k,m} \\
&& \qquad \qquad \qquad \qquad = d \ip{\fii_k}{\fii_m} \ip{\fii_n}{\fii_h} \, ,
\end{eqnarray*}
which proves \eqref{eq:recons} for $A=\kb{\fii_m}{\fii_n}$. 
Since every $A\in\lh$ is a linear combination of this type of operators, the claim follows.
\end{proof}

\begin{proof}[Proof of Theorem \ref{th:infocom}]
Let $\ell(\Z_d^2)$ be the linear space of complex functions on $\Z_d^2\equiv\Z_d\times\Z_d$. 
We recall that, by Proposition 5.1 in \cite{BuCaLa95}, $\Co_T$ is informationally complete if and only if the linear map
$$
V_T : \lh \frecc \ell (\Z^2_d) \, , \qquad [V_T (A)] (x,y) = \tr{\Co_T (x,y) A}
$$
is injective. 
Since the dimensions of  $\lh$ and $\ell (\Z^2_d)$ are both $d^2$, we conclude that $\Co_T$ is informationally complete if and only if $V_T$ is an isomorphism.

We define the following three linear maps
\begin{align*}
\Phi & : \lh\frecc \hh\otimes\hh \, , \qquad \Phi(A) = \frac{1}{d} \sum_{x,y\in\Z_d} \tr{AV^*_y U^*_x} \, \fii_x\otimes\fii_y \\
M_T & : \hh\otimes\hh \frecc \hh\otimes\hh \, , \qquad M_T (\fii_x\otimes\fii_y) = \tr{TU_x V_y} \, \fii_x\otimes\fii_y \\
R & : \hh\otimes\hh \frecc \ell (\Z^2_d) \, , \qquad R \phi (x,y) = \ip{\fii_y\otimes\fii_x}{\phi} \, .
\end{align*}
The map $R$ is clearly a linear isomorphism, $\Phi$ is a linear isomorphism by Proposition \ref{Prop:recons}, and $M_T$ is a linear isomorphism if and only if \eqref{eq:icbis} holds. 
We now evaluate the composition map $R(\ff\otimes\ff^* )M_T \Phi$. 
For all $A\in\lh$, we obtain
\begin{eqnarray*}
&& [R(\ff\otimes\ff^* )M_T \Phi (A)](h,k) = \\
&& \qquad = \frac{1}{d} \sum_{x,y\in\Z_d} \tr{TU_x V_y} \tr{AV^*_y U^*_x} \ip{\fii_k\otimes\fii_h}{(\ff\otimes\ff^*)(\fii_x\otimes\fii_y)} \\
&& \qquad = \frac{1}{d^2} \sum_{x,y\in\Z_d} \tr{TU_x V_y} \tr{AV^*_y U^*_x} \omega^{yh-xk} \\
&& \qquad = \frac{1}{d^2} \sum_{x',y'\in\Z_d} \omega^{y'h-x'k} \tr{TU^*_h U_{x'} V^*_k V_{y'}} \tr{AV_k V^*_{y'} U_h U^*_{x'}} \\
&& \qquad = \frac{1}{d^2} \sum_{x',y'\in\Z_d} \tr{TU^*_h V^*_k U_{x'} V_{y'}} \tr{AV_k U_h V^*_{y'} U^*_{x'}} \\
&& \qquad = \frac{1}{d} \tr{TU^*_h V^*_k AV_k U_h} \\
&& \qquad = [V_T (A)] (h,k)
\end{eqnarray*}
(in the third equality we set $x=x'-h$, $y=y'-k$, in the fourth we used the commutation relation for $U$ and $V$, and in the fifth we applied the reconstruction formula \eqref{eq:recons}). As the map $R(\ff\otimes\ff^* )M_T \Phi$ is an isomorphism if and only if \eqref{eq:icbis} holds, the same is true for the map $V_T$, and the theorem follows.
\end{proof}

\section*{Acknowledgements}

T.H. is grateful to Cosmo Lupo and Mario Ziman for illuminating discussions on qubit tomography.
T.H. acknowledges financial support from the Academy of Finland (grant no. 138135) and the Magnus Ehrnrooth foundation.

\end{document}